\author[Dominy and Rabitz]{Jason Dominy$^{1}$ and Herschel Rabitz$^{2,1}$\\$^{1}$ Program in Applied and Computational Mathematics, Princeton University\\$^{2}$ Department of Chemistry, Princeton University}
\title[Volume Fractions of ``Near-Critical'' Sets]{Volume Fractions of the Kinematic ``Near-Critical'' Sets of the Quantum Ensemble Control Landscape}
\DeclareMathOperator{\Vol}{Vol}
\DeclareMathOperator{\VolFrac}{VolFrac}
\DeclareMathOperator{\Area}{Area}
\DeclareMathOperator{\Tr}{Tr}
\DeclareMathOperator{\Image}{Im}
\DeclareMathOperator{\grad}{grad}
\DeclareMathOperator{\Hess}{Hess}
\DeclareMathOperator{\Crit}{Crit}
\DeclareMathOperator{\diag}{diag}
\DeclareMathOperator{\Orb}{Orb}
\DeclareMathOperator{\Stab}{Stab}
\DeclareMathOperator{\ad}{ad}
\DeclareMathOperator{\Ellipse}{Ellipse}
\newtheorem{theorem}{Theorem}
\newtheorem{lemma}{Lemma}
\newtheorem{example}{Example}
\newtheorem{conjecture}{Conjecture}
\begin{document}
\date{ \today}
\thanks{\emph{E-mail address}: {\tt jdominy@princeton.edu}, {\tt hrabitz@princeton.edu}}
\newcommand{\jmdstack}[2]{\genfrac{}{}{0pt}{}{#1}{#2}}
\newcommand{\rmd}{\mathrm{d}}
\newcommand{\rmT}{\mathrm{T}}
\newcommand{\rmHS}{\mathrm{HS}}
\newcommand{\UN}{\mathrm{U}(N)}
\newcommand{\uN}{\mathrm{u}(N)}
\newcommand{\Un}{\mathrm{U}(\mathbf{n})}
\newcommand{\un}{\mathrm{u}(\mathbf{n})}
\newcommand{\Um}{\mathrm{U}(\mathbf{m})}
\newcommand{\um}{\mathrm{u}(\mathbf{m})}
\newcommand{\UK}{\mathrm{U}(\mathbf{K})}
\newcommand{\uK}{\mathrm{u}(\mathbf{K})}
\newcommand{\Umn}{\Um\oplus\Un}
\newcommand{\umn}{\um\oplus\un}
\newcommand{\gZ}{\mathfrak{g}_{\mathbb{Z}}}
\newcommand{\tZ}{\mathfrak{t}_{\mathbb{Z}}}
\newcommand{\Ob}{\mathcal{O}}
\newcommand{\perm}{\mathscr{P}}
\newcommand{\cG}{\mathcal{G}}
\newcommand{\cH}{\mathcal{H}}
\newcommand{\cK}{\mathcal{K}_{\perm}}
\newcommand{\cA}{\mathcal{A}}
\newcommand{\cB}{\mathcal{B}}
\newcommand{\cC}{\mathcal{C}}
\newcommand{\CtlSp}{\mathbb{K}}
\newcommand{\smd}{d}

\begin{abstract}An estimate is derived for the volume fraction of a subset $C_{\epsilon}^{\perm} = \{U \;:\; \|\grad J(U)\|\leq \epsilon\}\subset\UN$ in the neighborhood of the critical set $C^{\perm}\simeq\mathrm{U}(\mathbf{n})\perm\mathrm{U}(\mathbf{m})$ of the kinematic quantum ensemble control landscape $J(U) = \Tr(U\rho U^{\dag}\mathcal{O})$, where $U$ represents the unitary time evolution operator, $\rho$ is the initial density matrix of the ensemble, and $\Ob$ is an observable operator.  This estimate is based on the Hilbert-Schmidt geometry for the unitary group and a first-order approximation of $\|\grad J(U)\|^{2}$.  An upper bound on these near-critical volumes is conjectured and supported by numerical simulation, leading to an asymptotic analysis as the dimension $N$ of the quantum system rises in which the volume fractions of these ``near-critical'' sets decrease to zero as $N$ increases.  This result helps explain the apparent lack of influence exerted by the many saddles of $J$ over the gradient flow.
\end{abstract}

\maketitle

\section{Introduction}


Control landscapes are proving to be valuable for providing insights into quantum optimal control theory \cite{Chakrabarti2007}.  A simplification can be achieved by observing that the dynamical landscape $\tilde{J}:\CtlSp\to\mathbb{R}$ (i.e. the objective functional) -- a map taking a control function as input and producing the value of the observable at some final time $T$ -- can be written as a composition of a kinematic landscape $J$ and a control$\mapsto$propagator map $V_{T}$.  Here the kinematic landscape $J:\UN\to\mathbb{R}$ is a smooth real-valued function on the unitary group, and $V_{T}:\mathbb{K}\to\UN$ is defined implicitly through the Schr\"odinger equation and returns the unitary time evolution operator at the final time $T$ for each given input control function.  The goal of quantum optimal control is generally to maximize the dynamical landscape $\tilde{J} = J\circ V_{T}$.

%

The latter basic landscape decomposition has been applied to various quantum control objectives, including state-to-state transitions \cite{Rabitz2004, Rabitz2006, Hsieh2008}, general quantum mechanical observables on an ensemble \cite{Rabitz2006a, Ho2006, Wu2008}, and unitary transformation (quantum gate) preparation \cite{Rabitz2005, Hsieh2008a, Ho2009}.  This area of research has focussed on identifying and characterizing the critical points of the landscapes, revealing important features of the associated gradient flows.  In particular, this work has shown that the kinematic landscapes have no suboptimal extrema that can act as traps preventing the gradient flow from reaching a global optimum.

The present paper considers one aspect of the optimal control of a quantum mechanical observable on a $N$-level system.  Such observables can be defined for an ensemble of initial states (whose collective state is described by a density matrix $\rho$) by the kinematic landscape $J(U) = \Tr(U\rho U^{\dag}\Ob)$, where the observable is represented by a Hermitian matrix $\Ob$.  It is known that, in addition to a single global minimum submanifold and a single maximum submanifold, this kinematic landscape generally contains a large number of saddle submanifolds \cite{Wu2008}.  The number of saddles depends on the eigenstructure of $\rho$ and $\Ob$ and generally increases with the size $N$ of the quantum system.  
By continuity, each saddle submanifold is surrounded by a neighborhood in which the norm of the gradient is small.  Any trajectory of the gradient flow entering such a neighborhood will be slowed by this small gradient and reach the global maximum less efficiently than trajectories that never enter such neighborhoods.  However, since the gradient flow is not attracted to saddles as it is to local maxima, the magnitude of the impact of these saddles on the overall gradient flow of this landscape is unclear.

With this perspective in mind, we derive estimates and bounds for the effective volume fractions of the critical submanifolds relative to the volume of the unitary group.  The term ``effective volume fraction'' is used here to mean the volume fraction of a region around a given critical submanifold in which the norm of the gradient is small, i.e. less than some $\epsilon >0$.  These volume fractions will serve as an indicator of the likelihood of a gradient integral curve falling under the influence of the saddles.  This influence can have a profound impact on the efficiency of a gradient ascent algorithm (whether implemented in the laboratory or in numerical simulations), or indeed any optimal control algorithm based on local information.

Additionally, we consider the asymptotic behavior of these volume fractions as the system size $N$ is allowed to increase.  By embedding the original $\rho$ and $\Ob$ in these larger spaces, this analysis addresses the role of truncation to finite $N$ in defining the characteristics of the gradient flow.  In particular, this research will show that the volume fractions tend to zero as $N\to\infty$, implying that the saddles should have a vanishingly small impact on the gradient flow.  This conclusion helps to explain the observed behavior of numerical quantum optimal control simulations, which show no evident increase in search effort as $N$ rises.

The paper is organized as follows.  Section \ref{sec:criticalSubmanifolds} reviews the structure of the critical submanifolds of $J(U) = \Tr(U\rho U^{\dag}\Ob)$.  The induced Hilbert-Schmidt measures of these submanifolds are computed in Section \ref{sec:criticalMeasure}.  The near-critical sets are described in Section \ref{sec:nearCriticalMeasure}, yielding estimates of the volumes of these sets in terms of a Haar measure.  Upper bounds for these volumes are developed in Section \ref{sec:asymptoticAnalysis} and the asymptotic behavior as $N\to\infty$ is considered.  The results are summarized Section \ref{sec:conclusions}.  Four appendices are included that detail the Chevalley lattice of direct products of Lie groups, the Hessian operator of $J$, the second fundamental forms of the critical submanifolds of $J$, and provide supporting arguments for a key conjecture.

\bigskip

\section{The Critical Submanifolds of the Landscape}
\label{sec:criticalSubmanifolds}
The landscape $J(U) = \Tr(U\rho U^{\dag}\Ob)$ is defined on the unitary group $\UN$, where the density matrix $\rho$ and the observable $\Ob$ are both $N\times N$ Hermitian matrices.  Without loss of generality, we will make the simplifying assumption that $\rho$ and $\Ob$ are both diagonal, with monotonically descending diagonal elements.  This may be done since passage from the landscape based on $\rho$ and $\Ob$ to the one based on their diagonalized forms represents merely a translation over the unitary group.  Since the metrics considered in this paper are bi-invariant, this translation may be neglected.

It will be fruitful at certain points in the analysis that follows to consider $\UN$ (with the geometry induced by the real Hilbert-Schmidt inner product on $\mathbb{C}^{N\times N}$) as a Riemannian globally symmetric space.  This will be done by fixing a permutation matrix $\perm\in S_{N}$ and letting $\cG:= \UN\otimes\UN$ and $\cK:=\{(V,\perm^{\dag}V\perm)\;:\;V\in\UN\}$.  Then, under the involutive analytic automorphism $\sigma(V,W) = (\perm W\perm^{\dag}, \perm^{\dag}V\perm)$, $(\cG,\cK)$ is a symmetric pair and $\cG/\cK$ is a Riemannian globally symmetric space under any $\cG$-invariant metric.  The map $\cG/\cK\to\UN$ given by $(V,W)\cK\mapsto V\perm W^{\dag}$ is clearly smooth, as is the section $U\mapsto (\mathbb{I}, U^{\dag}\perm)$.  Since the projection $\cG\to\cG/\cK$ is a smooth submersion, $V\perm W^{\dag}\mapsto(\mathbb{I},W\perm^{\dag} V^{\dag}\perm)\mapsto (V,W)\cK$ is a smooth inverse to the map $(V,W)\cK\mapsto V\perm W^{\dag}$, so this map is a diffeomorphism between $\cG/\cK$ and $\UN$.  The group action of $\cG$ on $\UN$ induced by the diffeomorphism is then given by $(V,W)\cdot U = V U W^{\dag}$, and the real Hilbert-Schmidt metric on $\UN$ is $\cG$-invariant, so $\UN$ is a Riemannian globally symmetric space under this structure.  The projection $\Phi_{\perm}:\cG\to\UN\simeq \cG/\cK$ can also be viewed as defining $\cG$ as a fiber bundle with base $\UN$ and fiber $\cK$.

Now, let $\cH := \Umn\subset\cG$, where $\Um = \mathrm{U}(m_{1})\oplus\cdots\oplus\mathrm{U}(m_{s})$, $\Un = \mathrm{U}(n_{1})\oplus\cdots\oplus\mathrm{U}(n_{r})$, and $\{m_{j}\}$ and $\{n_{i}\}$ are the multiplicities of the unique eigenvalues of $\Ob$ and $\rho$, respectively.  Then the critical set of $J(U) = \Tr(U\rho U^{\dag}\mathcal{O})$ has been shown to comprise a disjoint union of submanifolds of $\UN$, each of the form $\Phi_{\perm}(\cH)$ \cite{Wu2008}, where  $\Phi_{\perm}(\cH) = \Um\perm\Un$ is the orbit of the point $\perm\in S_{N}\subset\UN$ with respect to the induced action of $\cH$ on $\UN$.  So $\Phi_{\perm}(\cH) \cong \cH/\Stab_{\cH}(\perm)$, where the stabilizer $\Stab_{\cH}(\perm)$ is the subgroup of $\cH$ given by \begin{equation}\Stab_{\cH}(\perm) = \cH\cap\cK = \{(V,W)\in\Umn\;:\; \Phi_{\perm}(V,W) = V\perm W^{\dag} = \perm\}\end{equation} and is identified with $\UK\cong\Um\cap\perm\Un\perm^{\dag}$ in \cite{Wu2008}, i.e. $\Stab(\perm) = \{(\zeta, \perm^{\dag} \zeta \perm)\;:\; \zeta \in \Um\cap\perm\Un\perm^{\dag}\}$.  Here $\mathbf{K}$ is the $r \times s$ ``contingency table'' corresponding to $\mathbf{m}$, $\mathbf{n}$, and $\perm$.  For a given permutation $\perm$, $\Umn$ can be expressed as a fiber bundle \cite{Steenrod1951} as
	\begin{equation}\begin{CD} \Um\cap\perm\Un\perm^{\dag} @>{\zeta\mapsto(\zeta,\perm^{\dag}\zeta\perm)}>>  \Umn \\ @. @VV\Phi_{\perm}V\\  @. \qquad\Orb(\perm)\subset\UN.\end{CD}\nonumber\end{equation}
We seek to compute the ``volume'' (measure) of $\Orb(\perm)$ as an embedded submanifold of $\UN$, where $\UN$ is given the Riemannian metric induced by the real Hilbert-Schmidt inner product on $\mathbb{C}^{N\times N}$.  We will see that this problem reduces to one of computing the volumes of the Lie groups $\UK$ and $\Umn$ with respect to certain specific geometries.


\begin{lemma}For any permutation matrix $\hat{\perm}\in S_{N}$, there exists a $\perm\in S_{N}\cap\Orb(\hat{\perm})$ such that $\Um\cap\perm\Un\perm^{\dag} = \UK$ in terms of block-diagonal structure and $\Un\cap\perm^{\dag}\Um\perm = \mathrm{U}(\mathbf{\tilde{K}}) = \perm^{\dag}\UK\perm$ where $\mathrm{U}(\mathbf{\tilde{K}})$ has essentially the same block structure as $\UK$, but the blocks are reordered.
\end{lemma}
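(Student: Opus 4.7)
The plan is to exploit the freedom in the orbit $\Orb(\hat{\perm}) = \Um\hat{\perm}\Un$ by restricting attention to its subgroup of block-internal row and column permutations, namely $S_{m_1}\times\cdots\times S_{m_s}\subset\Um$ acting on the left and $S_{n_1}\times\cdots\times S_{n_r}\subset\Un$ acting on the right. These actions send permutation matrices to permutation matrices and preserve the combinatorial data of the contingency table $\mathbf{K}$, whose entry $K_{ij}$ counts the ones of $\hat{\perm}$ in the rectangular submatrix cut out by the $i$-th $\Um$-row-block and the $j$-th $\Un$-column-block (so in particular $\sum_j K_{ij}=m_i$ and $\sum_i K_{ij}=n_j$).

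First I would choose block-internal row permutations $P_L$ so that, within each row-block $i$, the rows are re-ordered into $r$ consecutive groups of sizes $K_{i1}, K_{i2},\ldots, K_{ir}$ according to which column-block the unique $1$ in that row lies in. Next I would choose block-internal column permutations $P_R$ so that, within each column-block $j$, the columns are grouped according to which row-block their $1$ came from and placed so that the ones of $P_L\hat{\perm}P_R$ in each $(i,j)$-block fill the diagonal of a $K_{ij}\times K_{ij}$ identity submatrix. The resulting $\perm := P_L\hat{\perm}P_R\in S_N\cap\Orb(\hat{\perm})$ can then be re-indexed by triples $(i,j,a)$ with $1\le a\le K_{ij}$ on both rows and columns in such a way that $\perm$ sends row $(i,j,a)$ to column $(i,j,a)$.

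In this triple-indexing, $\Um$ consists precisely of the matrices block-diagonal in the index $i$, while $\perm\Un\perm^{\dag}$ consists of matrices block-diagonal in the index $j$; their intersection is therefore exactly the set of matrices block-diagonal in the pair $(i,j)$ with blocks of sizes $K_{ij}\times K_{ij}$, which is the literal block-diagonal realization of $\UK$. For the second claim I would repeat the argument with the roles of $\Um$ and $\Un$ swapped: using the same $\perm$ but now grouping rows/columns first by $j$ and then by $i$, the subgroup $\Un\cap\perm^{\dag}\Um\perm$ appears as block-diagonal matrices indexed by $(j,i)$, giving the same collection of blocks $\{K_{ij}\times K_{ij}\}$ but enumerated in the order $j=1,\ldots,r$ rather than $i=1,\ldots,s$. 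Conjugation by $\perm^{\dag}$ implements exactly this re-ordering of block slots, yielding $\Un\cap\perm^{\dag}\Um\perm = \perm^{\dag}\UK\perm = \mathrm{U}(\tilde{\mathbf{K}})$.

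The main obstacle is purely bookkeeping: one must show that the rows of $\perm$ can simultaneously be grouped by target column-block and have their column positions within a column-block lined up into a clean block-identity, so that the subsequent intersection computation produces $\UK$ on the nose rather than up to an unspecified conjugation. Once the consistency of the row/column re-indexing is established, both subgroup identities reduce to the elementary observation that matrices simultaneously block-diagonal in two orthogonal partitions of a common index set are block-diagonal in the common refinement.
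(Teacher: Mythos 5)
Your proposal is correct and follows essentially the same route as the paper: both exploit the freedom $\perm = \Sigma\hat{\perm}\Gamma^{\dag}$ with block-internal permutations $\Sigma\in S_{N}\cap\Um$ and $\Gamma\in S_{N}\cap\Un$ to sort rows within each $\Um$-block by target $\Un$-block (and dually for columns), so that the intersections become literal block-diagonal groups with blocks of sizes $k_{ij}$. Your version is merely more explicit, normalizing $\hat{\perm}$ to a canonical form with identity cells and invoking the common-refinement observation, whereas the paper conjugates the intersection subgroup and sorts it directly; the content is the same.
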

\begin{proof}
For any $\Sigma\in S_{N}\cap\Um$ and $\Gamma\in S_{N}\cap\Un$, $\perm = \Sigma\hat{\perm}\Gamma^{\dag}\in S_{N}\cap\Orb(\perm)$.  Then $\Um\cap\perm\Un\perm^{\dag} = \Sigma^{\dag}\big(\Um\cap\hat{\perm}\Un\hat{\perm}^{\dag}\big)\Sigma$, so that $\Sigma$ may be chosen to reorder the rows and columns of each diagonal block of $\Um$ to group the matrix elements of $\Um\cap\hat{\perm}\Un\hat{\perm}^{\dag}$ according to the block of $\Un$ from whence they came, creating sub-blocks within each block of $\Um$.  This is exactly the structure desired for $\UK$.  Similarly, $\Un\cap\perm^{\dag}\Um\perm = \Gamma\big(\Un\cap\hat{\perm}^{\dag}\Um\hat{\perm}\big)$, so that $\Gamma$ may be likewise chosen to reorder the elements in each diagonal block of $\Un$ according to the originating block of $\Um$, constructing $\mathrm{U}(\mathbf{\tilde{K}})$.
\end{proof}

Without loss of generality, the remainder of the paper will use the permutation $\perm$ described in the above lemma, so that $\Um\cap\perm\Un\perm^{\dag} = \UK$.

\bigskip

\section{The Hilbert-Schmidt Measure of the Critical Submanifolds}
\label{sec:criticalMeasure}
We now turn to the problem of computing the volumes of the critical submanifolds.  To that end, we will choose geometries for $\UK$ and $\Umn$ such that there is a local isometry between $\Umn$ and $\Orb(\perm)\oplus\UK$.  Then it will be shown that the volume of $\Orb(\perm)$ is just the quotient of the volumes of $\Umn$ and $\UK$ under these specific geometries.

Let $\cA, \cB, \cC$ denote the following subspaces of $\umn$:
\begin{subequations}
\begin{align}
	\cA & = \{(X, \perm^{\dag}X\perm)\;:\; X\in\uK\}\\
	\cB & = \{(X, -\perm^{\dag}X\perm)\;:\; X\in\uK\}\\
	\cC & = (\cA\oplus\cB)^{\perp} = \{(Y, Z)\;:\; Y\in\um/\uK \text{ and } Z\in\un/\mathrm{u}(\tilde{\mathbf{K}})\}.
\end{align}
\end{subequations}
It may be readily verified that these spaces are mutually orthogonal in the Hilbert-Schmidt geometry and span $\umn$.  We extend them by left translation to form mutually orthogonal distributions over $\Umn$ that span the tangent space at each point. 
 
\begin{lemma}The distributions $\cA$, $\cB$, and $\cC$ are right invariant with respect to $\Stab(\perm)$.\end{lemma}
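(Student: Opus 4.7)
The plan is to reduce the claim to a statement about the adjoint action of $\Stab(\perm)$ on $\umn$. Since each of $\cA, \cB, \cC$ is defined at the identity of $\Umn$ and extended by left translation, the distribution $\cA$ at a point $g \in \Umn$ is $L_{g*}\cA_{e}$. Right translation by $s \in \Stab(\perm)$ carries $L_{g*}\cA_{e}$ to $L_{g*}R_{s*}\cA_{e}$, which must equal $L_{(gs)*}\cA_{e} = L_{g*}L_{s*}\cA_{e}$. Cancelling $L_{g*}$, the right-invariance of $\cA$ becomes equivalent to $\Ad(s^{-1})\cA_{e} \subseteq \cA_{e}$ for every $s$, and because $\Stab(\perm)$ is a group this is the same as $\Ad(s)\cA_{e} \subseteq \cA_{e}$. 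The analogous reduction applies to $\cB$ and $\cC$, so it suffices to verify adjoint invariance of each subspace at the identity.

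Next, I would parameterize an arbitrary element of $\Stab(\perm)$ as $s = (\zeta,\perm^{\dag}\zeta\perm)$ with $\zeta \in \UK = \Um \cap \perm\Un\perm^{\dag}$, and a generic element of $\cA$ at the identity as $(X,\perm^{\dag}X\perm)$ with $X \in \uK$. Since $\Ad$ on $\Umn$ acts componentwise by unitary conjugation, a direct computation gives
\begin{equation*}
\Ad(s)(X,\perm^{\dag}X\perm) \;=\; \bigl(\zeta X \zeta^{\dag},\; \perm^{\dag}\zeta X \zeta^{\dag}\perm\bigr).
\end{equation*}
Because $\UK$ is a Lie group with Lie algebra $\uK$, conjugation by $\zeta$ stabilizes $\uK$, so $\zeta X \zeta^{\dag} \in \uK$ and the image lies back in $\cA_{e}$. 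The identical calculation with a sign flip on the second coordinate handles $\cB_{e}$, since $\Ad(s)(X,-\perm^{\dag}X\perm) = (\zeta X\zeta^{\dag},\,-\perm^{\dag}\zeta X\zeta^{\dag}\perm) \in \cB_{e}$.

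For $\cC$, I would use that $\Ad(s)$, acting on $\umn$ as a pair of unitary conjugations, is an isometry of the real Hilbert--Schmidt inner product. Since it preserves the orthogonal direct sum $\cA_{e} \oplus \cB_{e}$ by the previous paragraph, it must preserve the orthogonal complement $\cC_{e} = (\cA_{e}\oplus\cB_{e})^{\perp}$ as well. The only conceptual subtlety in the argument is the initial reduction from right-invariance of a left-translated distribution to adjoint invariance at the identity; the remaining checks depend only on the elementary facts that $\Ad(\UK)\uK = \uK$ and that componentwise adjoint action is a Hilbert--Schmidt isometry, so I do not expect a serious obstacle.
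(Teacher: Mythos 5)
Your proposal is correct and follows essentially the same route as the paper: both reduce right invariance of the left-translated distributions to $\Ad$-invariance of $\cA_{e}$, $\cB_{e}$, $\cC_{e}$ under $\Stab(\perm)$, verify it for $\cA$ and $\cB$ via $\Ad(\UK)\uK = \uK$ acting componentwise, and obtain $\cC$ from the fact that the (componentwise unitary) adjoint action is a Hilbert--Schmidt isometry preserving the complement of $\cA_{e}\oplus\cB_{e}$. The only cosmetic difference is that you phrase the reduction abstractly through $L_{g*}$, $R_{s*}$, and $\Ad(s^{-1})$, while the paper performs the equivalent manipulation by commuting $(\zeta,\perm^{\dag}\zeta\perm)$ past $(X,\perm^{\dag}X\perm)$ inside the left-translated set.
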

\begin{proof}
First observe that for any $\zeta\in \UK$
, $X\in\uK$ if and only if $\zeta X\zeta^{\dag} \in \uK$.  Likewise $Y\in\um/\uK$ if and only if $\langle Y, X\rangle_{\rmHS} = 0$ for all $X\in \uK$, if and only if $\langle \zeta Y \zeta^{\dag}, \zeta X \zeta^{\dag}\rangle_{\rmHS} = 0$ for all $X\in\uK$, if and only if $\zeta Y\zeta^{\dag} \in \um/\UK$ and similarly for $Z\in\un/\mathrm{u}(\tilde{\mathbf{K}})$ with respect to $\tilde{\zeta}$.  In other words, the subspaces of $\umn$ given by evaluating the distributions $\cA$, $\cB$, and $\cC$ at the identity are invariant under the adjoint action of $\Stab(\perm)$.  Then, for any $(\zeta, \perm^{\dag}\zeta\perm) \in\Stab(\perm)$ and $(V,W)\in\Umn$
	\begin{subequations}
	\begin{align}\cA_{(V,W)\cdot(\zeta, \perm^{\dag}\zeta\perm)} & = \{(V,W)\cdot(\zeta, \perm^{\dag}\zeta\perm)\cdot(X,\perm^{\dag}X\perm\;:\;X\in\uK\}\\
	& = \{(V,W)\cdot(X, \perm^{\dag} X \perm)\cdot(\zeta, \perm^{\dag}\zeta\perm \;:\; X\in\uK\},
	\end{align}
	\end{subequations}
which is the right translation of $\cA_{(V,W)}$ to $(V,W)\cdot(\zeta, \perm^{\dag}\zeta\perm)$, so that $\cA$ is right invariant with respect to $\Stab(\perm)$.  Similar arguments apply to $\cB$ and $\cC$.
\end{proof}

Now, denote by $\hat{\Phi}_{\perm}$ the restriction of $\Phi_{\perm}$ to $\cH = \Umn$, and consider the images of these three distributions $\cA$, $\cB$, and $\cC$ under $\rmd \hat{\Phi}_{\perm}$.  Since for any $(V,W)\in\Umn$ and any $(\delta V, \delta W) \in \rmT_{(V,W)}\Umn$, there exists $(X,Y)\in\umn$ such that $(\delta V, \delta W) = (VX, WY)$, \begin{equation}\rmd_{(V,W)}\hat{\Phi}_{\perm}(\delta V, \delta W) = \rmd_{(V,W)}\hat{\Phi}_{\perm}(VX, WY) = VX\perm W^{\dag} - V\perm Y W^{\dag}.\end{equation}
Then the kernel of $\rmd_{(V,W)}\hat{\Phi}_{\perm}$ is $\{(VX,WY)\in\rmT_{(V,W)}\Umn\;:\; Y = \perm^{\dag}X\perm\}$, which is exactly the left-invariant distribution $\cA$ evaluated at $(V,W)$.  When acting on an element $(V,W)\cdot(X,-\perm^{\dag}X\perm)\in\cB_{(V,W)}$, $\rmd_{(V,W)}\hat{\Phi}_{\perm}$ yields $2VX\perm W^{\dag}$,  and when acting on an element $(V,W)\cdot(Y,Z)\in\cC_{(V,W)}$, $\rmd_{(V,W)}\hat{\Phi}_{\perm}$ yields $VY\perm W^{\dag} - V\perm Z W^{\dag}$.  Then, since $X\in \uK$ and $Y-\perm Z\perm^{\dag}\in\uN/\uK$, the $\rmd_{(V,W)} \hat{\Phi}_{\perm}$ images of $\cB_{(V,W)}$ and $\cC_{(V,W)}$ are orthogonal complements under any bi-invariant metric on $\UN$, in particular the Hilbert-Schmidt metric.  Furthermore, for $X_{1},X_{2}\in\uK$, $Y_{1},Y_{2}\in\um/\uK$, and $Z_{1},Z_{2}\in\un/\mathrm{u}(\tilde{\mathbf{K}})$, 
	\begin{subequations}
	\begin{align}
		\langle\rmd_{\mathbb{I}} \hat{\Phi}_{\perm}(X_{1},\;-\perm^{\dag}X_{1}\perm),\rmd_{\mathbb{I}} \hat{\Phi}_{\perm}(X_{2},-\perm^{\dag}X_{2}\perm)\rangle_{\rmHS} & = 2\langle (X_{1},-\perm^{\dag}X_{1}\perm), (X_{2},-\perm^{\dag}X_{2}\perm)\rangle_{\rmHS}\\
		\langle\rmd_{\mathbb{I}} \hat{\Phi}_{\perm}(Y_{1},\;Z_{1}),\rmd_{\mathbb{I}} \hat{\Phi}_{\perm}(Y_{2},Z_{2})\rangle_{\rmHS} & = \langle (Y_{1},Z_{1}),\;(Y_{2},Z_{2})\rangle_{\rmHS}
	\end{align}
	\end{subequations}
since $\perm Z_{1}\perm^{\dag},\perm Z_{2}\perm^{\dag}\in (\perm\un\perm^{\dag})/\uK\subset \uN/\um$ are orthogonal to $Y_{1}$ and $Y_{2}$ in the Hilbert-Schmidt metric.  So the image through $\rmd\hat{\Phi}_{\perm}$ of $\cA$ is zero, the restriction of $\rmd\hat{\Phi}_{\perm}$ to $\cB$ is two times a linear isometry, and the restriction to  $\cC$ is a linear isometry.

 Since $\hat{\Phi}_{\perm}$ is a fiber bundle with base $\Orb(\perm)$ and fiber $\UK$, for any $U_{0}\in\Orb(\perm)$, there exists a neighborhood $Q$ of $U_{0}$ and a smooth local section $\gamma_{Q}:Q\rightarrow \Umn$ such that $\hat{\Phi}_{\perm}(\gamma_{Q}(U)) = U$ for all $U\in Q$ and such that $\Image\big(\rmd_{U}\gamma_{Q}\big) = \cB_{\gamma_{Q}(U)}\oplus\cC_{\gamma_{Q}(U)}$, the orthogonal complement of $\cA_{\gamma_{Q}(U)} = \ker\big(\rmd_{\gamma_{Q}(U)}\hat{\Phi}_{\perm}\big)$ in $\rmT_{\gamma_{Q}(U)}\Umn$ with respect to the Hilbert-Schmidt metric.  Let $\Psi_{Q}:Q\oplus \UK \rightarrow \hat{\Phi}_{\perm}^{-1}(Q)\subset\Umn$ be defined by \begin{equation}\Psi_{Q}(U,\zeta):= \gamma_{Q}(U)\cdot(\zeta,\perm^{\dag}\zeta\perm).\end{equation}
Then \begin{equation}\rmd_{(U,\zeta)}\Psi_{Q}(\delta U, \delta \zeta) = \gamma_{Q}(U)\cdot (\delta\zeta, \perm^{\dag}\delta\zeta\perm) + \rmd_{U}\gamma_{Q}(\delta U)\cdot(\zeta, \perm^{\dag}\zeta\perm)\end{equation} where it may be observed that $\gamma_{Q}(U)\cdot(\delta\zeta, \perm^{\dag}\delta\zeta\perm)\in \cA_{\Psi_{Q}(U,\zeta)}$ by left invariance and the fact that $(\delta\zeta, \perm^{\dag}\delta\zeta\perm)\in\cA_{(\zeta, \perm^{\dag}\zeta\perm)}$ and also $\rmd_{U}\gamma_{Q}(\delta U)\cdot(\zeta, \perm^{\dag}\zeta\perm)\in \cB_{\Psi_{Q}(U,\zeta)}\oplus\cC_{\Psi_{Q}(U,\zeta)}$ by right invariance and the definition of $\gamma_{Q}$ whereby $d_{U}\gamma_{Q}(\delta U)\in \cB_{\gamma_{Q}(U)}\oplus\cC_{\gamma_{Q}(U)}$.

For any given $\perm$, we now define a Riemannian metric $\langle\cdot, \cdot\rangle_{\perm}$ on $\Umn$ at the point $\Upsilon = (V,W)$ as follows. For any $\delta\Upsilon_{1}, \delta\Upsilon_{2}\in\rmT_{\Upsilon}\big(\Umn\big)$, let
	\begin{subequations}
	\begin{align}
		\langle \delta\Upsilon_{1}, \delta \Upsilon_{2}\rangle_{\perm} :&= \frac{1}{2}\big\langle \delta \Upsilon_{1}^{\cA}, \delta \Upsilon_{2}^{\cA}\big\rangle_{\rmHS} + \big\langle \rmd_{\Upsilon}\hat{\Phi}_{\perm}(\delta\Upsilon_{1}^{\cB} + \delta\Upsilon_{1}^{\cC}),\rmd_{\Upsilon}\hat{\Phi}_{\perm}(\delta\Upsilon_{2}^{\cB} + \delta\Upsilon_{2}^{\cC})\big\rangle_{\rmHS}\\
		& = \frac{1}{2}\big\langle \delta \Upsilon_{1}^{\cA}, \delta \Upsilon_{2}^{\cA}\big\rangle_{\rmHS} + 2\big\langle \delta\Upsilon_{1}^{\cB}, \delta\Upsilon_{2}^{\cB}\big\rangle_{\rmHS} + \big\langle \delta\Upsilon_{1}^{\cC}, \delta\Upsilon_{2}^{\cC}\big\rangle_{\rmHS},
	\end{align}
	\label{eqn:PiMetric}
	\end{subequations}
where we have used the orthogonal decomposition $\delta\Upsilon = \delta\Upsilon^{\cA} + \delta\Upsilon^{\cB} + \delta\Upsilon^{\cC}$ of $\delta \Upsilon$ into the three orthogonal subspaces given by $\cA$, $\cB$, and $\cC$. Then for any $U\in Q$ and $\zeta\in \UK$, 
	\begin{subequations}
	\begin{align}
		\lefteqn{\big\langle \rmd_{(U,\zeta)}\Psi_{Q}(\delta U_{1}, \delta\zeta_{1}),\; \rmd_{(U,\zeta)}\Psi_{Q}(\delta U_{2}, \delta\zeta_{2})\big\rangle_{\perm}}\nonumber\\
		& =\frac{1}{2}\big\langle \gamma_{Q}(U)\cdot(\delta\zeta_{1}, \perm^{\dag}\delta\zeta_{1}\perm), \; \gamma_{Q}(U)\cdot(\delta\zeta_{2}, \perm^{\dag}\delta\zeta_{2}\perm)\rangle_{\rmHS}\nonumber\\
		& \qquad + \big\langle \rmd_{\Psi_{Q}(U,\zeta)}\hat{\Phi}_{\perm}\big(\rmd_{U}\gamma_{Q}(\delta U_{1})\cdot(\zeta, \perm^{\dag}\zeta\perm)\big), \;\rmd_{\Psi_{Q}(U,\zeta)}\hat{\Phi}_{\perm}\big(\rmd_{U}\gamma_{Q}(\delta U_{2})\cdot(\zeta, \perm^{\dag}\zeta\perm)\big)\big\rangle_{\rmHS}\\
		& = \big\langle \delta\zeta_{1}, \delta\zeta_{2}\rangle_{\rmHS} + \big\langle \rmd_{\gamma_{Q}(U)}\hat{\Phi}_{\perm}\big(\rmd_{U}\gamma_{Q}(\delta U_{1})\big), \;\rmd_{\gamma_{Q}(U)}\hat{\Phi}_{\perm}\big(\rmd_{U}\gamma_{Q}(\delta U_{2})\big)\big\rangle_{\rmHS}\\
		& = \big\langle \delta\zeta_{1}, \delta\zeta_{2}\rangle_{\rmHS} + \big\langle\delta U_{1}, \delta U_{2}\big\rangle_{\rmHS},
	\end{align}
	\end{subequations}
where the last step follows from the fact that $\hat{\Phi}_{\perm}\circ\gamma_{Q}$ is the identity map on $Q$.  So, if we endow $\UK\subset \mathbb{C}^{N\times N}$ with the Riemannian metric induced by the real Hilbert-Schmidt inner product on $\mathbb{C}^{N\times N}$, then $\Psi_{Q}$ is an \emph{isometric} diffeomorphism between $Q\oplus\UK$ and $\hat{\Phi}_{\perm}^{-1}(Q)\subset \Umn$.  It then follows that $\Vol_{\rmHS}(Q)\Vol_{\rmHS}\big(\UK\big) = \Vol_{\perm}\big(\hat{\Phi}_{\perm}^{-1}(Q)\big)$, and therefore \begin{equation}\Vol_{\rmHS}\big(\Orb(\perm)\big) = \frac{\Vol_{\perm}\big(\Umn\big)}{\Vol_{\rmHS}\big(\UK\big)}\label{eqn:pvolQuotientPerm}\end{equation}
so the problem reduces to computing the ratio of the volumes of the two Lie groups with respect to the indicated geometries.

The expression in \eqref{eqn:pvolQuotientPerm} may be simplified further by considering the volume form on $\Umn$ induced by the metric $\langle\cdot,\cdot\rangle_{\perm}$ \cite{Warner1983}.  This volume form can be realized by choosing orthonormal bases of $\cA$, $\cB$, and $\cC$ at $(\mathbb{I},\mathbb{I})$ in the $\langle\cdot,\cdot\rangle_{\perm}$ metric and extending them to orthonormal vector fields by left translation.  Denote these fields by $\{F_{i}\}$ where $i = 1,\dots, d$ where $d = \sum m_{j}^{2} + \sum n_{i}^{2}$.  Then construct the dual basis of 1-forms $\omega_{i} = \langle F_{i}, \cdot\rangle_{\perm}$ and the volume form by $\omega_{i}\wedge\cdots\wedge\omega_{d}$.  Let $\kappa = \sum k_{ij}^{2} = \|\mathbf{K}\|_{\rmHS}^{2}$.  Because of the relationship between $\langle \cdot, \cdot\rangle_{\perm}$ and the Hilbert-Schmidt metric described in \eqref{eqn:PiMetric}, if $F_{1},\dots,F_{\kappa}$ is the basis for $\cA$ and $F_{\kappa + 1}, \dots, F_{2\kappa}$ is the basis for $\cB$ under $\langle \cdot, \cdot\rangle_{\perm}$, then $\frac{1}{\sqrt{2}}F_{1},\dots,\frac{1}{\sqrt{2}}F_{\kappa}$ and $\sqrt{2}F_{\kappa + 1}, \dots, \sqrt{2}F_{2\kappa}$ are the corresponding orthonormal vector fields under the Hilbert-Schmidt metric.  So $\sqrt{2}\omega_{1},\dots, \sqrt{2}\omega_{\kappa}$ and $\frac{1}{\sqrt{2}}\omega_{\kappa+1},\dots, \frac{1}{\sqrt{2}}\omega_{2\kappa}$ are the corresponding 1-forms under the Hilbert-Schmidt metric.  As a result, the volume form $\hat{\omega}$ on $\Umn$ induced by the Hilbert-Schmidt metric is identical to $\omega$, the volume form induced by $\langle \cdot, \cdot\rangle_{\perm}$:
\begin{equation}\hat{\omega} = \sqrt{2}\omega_{1}\wedge\cdots\wedge\sqrt{2}\omega_{\kappa}\wedge\frac{1}{\sqrt{2}}\omega_{\kappa +1}\wedge\cdots\wedge\frac{1}{\sqrt{2}}\omega_{2\kappa}\wedge\omega_{2\kappa+1}\wedge\cdots\wedge\omega_{d} = \omega.\end{equation}
So, while $\langle\cdot,\cdot\rangle_{\perm}$ defines a different geometry on $\Umn$ compared to the Hilbert-Schmidt metric, stretching some dimensions and shrinking others, the result is no net difference in the volume form and therefore no difference in the volume under these two geometries.  So we may replace \eqref{eqn:pvolQuotient} with the expression \begin{equation}\Vol_{\rmHS}\big(\Orb(\perm)\big) = \frac{\Vol_{\rmHS}\big(\Umn\big)}{\Vol_{\rmHS}\big(\UK\big)},\label{eqn:pvolQuotient}\end{equation} which reduces the problem to one of computing the volumes of $\Umn$ and $\UK$, both just direct sums of unitary groups, under the Hilbert-Schmidt metric.

\begin{lemma}[The Volume of $\mathrm{U}(\mathbf{a})$]
Let $\mathbf{a} \in \mathbb{N}^{b}$ be any vector of non-negative integers.  Then the volume of $\mathrm{U}(\mathbf{a}) = \mathrm{U}(a_{1})\oplus\cdots\oplus\mathrm{U}(a_{b})$ with respect to the Riemannian metric induced by the real Hilbert-Schmidt inner product is given by
\begin{equation}
\Vol_{\rmHS}\big(\mathrm{U}(\mathbf{a})\big) = \frac{(2\pi)^{\frac{1}{2}\sum a_{l}^{2} + \frac{\bar{a}}{2}}}{\prod_{l}\prod_{s=0}^{a_{l}-1}s!}.
\end{equation}
\end{lemma}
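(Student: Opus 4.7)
The plan is to reduce the direct-sum case to the single-factor case, then prove the single-factor formula by induction using the standard unit-vector fibration. For the reduction, note that the embedding $\mathrm{U}(\mathbf{a}) = \bigoplus_l \mathrm{U}(a_l) \hookrightarrow \mathbb{C}^{N\times N}$ places the different blocks in disjoint entry positions, which are mutually orthogonal under the real Hilbert--Schmidt inner product. The induced Riemannian metric is therefore the product of the HS metrics on the factors, so $\Vol_{\rmHS}(\mathrm{U}(\mathbf{a})) = \prod_l \Vol_{\rmHS}(\mathrm{U}(a_l))$. Given the single-factor formula $\Vol_{\rmHS}(\mathrm{U}(a)) = (2\pi)^{a(a+1)/2}/\prod_{s=0}^{a-1}s!$, the exponents combine as $\sum_l a_l(a_l+1)/2 = \tfrac{1}{2}\sum a_l^2 + \tfrac{1}{2}\bar{a}$, matching the claim (with $\bar{a} = \sum_l a_l$ the total dimension).

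For the single-factor formula, I would induct on $a$ using the principal $\mathrm{U}(a-1)$-bundle $\pi:\mathrm{U}(a)\to S^{2a-1}$, $U\mapsto Ue_1$. The base case $a=1$ is the unit circle of length $2\pi$. For the inductive step, the HS-orthogonal complement of $\mathfrak{u}(a-1)$ in $\mathfrak{u}(a)$ consists of antihermitian matrices supported only on the first row and column, splitting HS-orthogonally into the line $\mathbb{R}\cdot iE_{11}$ (of HS-norm $1$) and a $(2a-2)$-dimensional subspace spanned by $E_{1k}-E_{k1}$ and $i(E_{1k}+E_{k1})$ for $k\geq 2$, each of which has HS-norm $\sqrt{2}$. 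Pushing forward under $X\mapsto Xe_1$ to $T_{e_1}S^{2a-1}$, the Hopf direction is isometric while the $(2a-2)$ complementary directions are stretched by $\sqrt{2}$ relative to the standard round metric. The quotient submersion metric on $S^{2a-1}$ therefore has volume $(\sqrt{2})^{2a-2}\cdot 2\pi^a/(a-1)! = 2^{a-1}\cdot 2\pi^a/(a-1)!$, and the Riemannian submersion/Fubini identity yields
\[
\Vol_{\rmHS}(\mathrm{U}(a)) = \Vol_{\rmHS}(\mathrm{U}(a-1))\cdot 2^{a-1}\cdot \frac{2\pi^a}{(a-1)!}.
\]
Substituting the inductive hypothesis and using $(2\pi)^a = 2^a\pi^a$ together with $a(a+1)/2 - (a-1)a/2 = a$ closes the induction.

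The main technical obstacle is the careful bookkeeping of the $\sqrt{2}$ factor on off-diagonal antihermitian basis vectors: each off-diagonal root vector of $\mathfrak{u}(a)$ has HS squared-norm twice that of a diagonal Cartan vector, which is what produces the $2^{a-1}$ in the sphere-volume computation. This same normalization issue is precisely what the paper's Chevalley-lattice appendix must track at the level of coroots for the direct-product group $\mathrm{U}(\mathbf{m})\oplus \mathrm{U}(\mathbf{n})$, and the single-factor formula could equivalently be derived as a specialization of Macdonald's volume formula to $\mathrm{U}(a)$, with the positive root lengths evaluated in the HS normalization.
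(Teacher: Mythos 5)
Your proof is correct, and it takes a genuinely different route from the paper. The paper invokes Macdonald's formula directly: Appendix A builds the Chevalley lattice basis $\{\tau_j\}\cup\{\xi_\alpha\}\cup\{\eta_\alpha\}$ for $\mathfrak{g} = \mathrm{u}(a_1)\oplus\cdots\oplus\mathrm{u}(a_b)$, evaluates the Gram determinant to get $\lambda(\mathfrak{g}/\gZ) = 2^{\frac12\sum_l a_l(a_l-1)}$, and multiplies by $\prod_l\prod_{s=0}^{a_l-1}\Vol(S^{2s+1})$. You instead split off the direct-sum factorization at the start (the paper also records $\Vol_{\rmHS}(\mathrm{U}(\mathbf{a})) = \prod_l\Vol_{\rmHS}(\mathrm{U}(a_l))$, but as a consequence of the final formula rather than as a reduction step), and then prove the single-factor formula by induction via the fibration $\mathrm{U}(a)\to S^{2a-1}$ with fiber $\mathrm{U}(a-1)$. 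Your accounting is correct: the horizontal space at the identity is spanned by $iE_{11}$ (HS-norm $1$) and $E_{1k}-E_{k1}$, $i(E_{1k}+E_{k1})$ (HS-norm $\sqrt{2}$), the quotient metric scales the $2a-2$ off-Hopf directions by $\sqrt{2}$, the sphere volume picks up $2^{a-1}$, and the coarea/Fubini identity for a Riemannian submersion with isometric fibers closes the recursion, which is readily verified to produce $(2\pi)^{a(a+1)/2}/\prod_{s=0}^{a-1}s!$. The two arguments track the same normalization: the Gram determinant $2^{\frac12\sum_l a_l(a_l-1)}$ in the paper and the cumulative product $\prod_{a'=2}^{a}2^{a'-1} = 2^{a(a-1)/2}$ in your induction are the same bookkeeping of the root-vector lengths under the HS metric. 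The paper's route is a one-pass formula applicable to any compact Lie group once the Chevalley lattice and exponents are known; your induction is more elementary and self-contained but specific to the unitary sphere fibration.
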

\begin{proof}To apply Macdonald's formula for the volume of a Lie group \cite{Macdonald1980, Hashimoto1997}, we need the basis for the Chevalley lattice given by the vectors $\{\tau_{j}\}$, $\{\xi_{\alpha}\}$, and $\{\eta_{\alpha}\}$ in Appendix \ref{sec:ChevalleyLattice}.  These vectors are all mutually orthogonal in the real Hilbert-Schmidt inner product on $\mathrm{U}(\mathbf{a})$.  In addition, the $\tau_{j}$'s have norm $1$, and the $\xi_{\alpha}$'s and $\eta_{\alpha}$'s have norm $\sqrt{2}$.  Letting $\bar{a} = \sum a_{l}$, the Gram matrix of this basis is then a diagonal matrix with $\bar{a}$ entries equal to one and $\sum_{l}a_{l}(a_{l}-1)$ entries equal to two, so that the volume of the fundamental cell is just \begin{equation}\lambda(\mathfrak{g}/\gZ) = 2^{\frac{1}{2}\sum_{l}a_{l}(a_{l}-1)} = 2^{\frac{1}{2}\sum_{l}a_{l}^{2}-\frac{\bar{a}}{2}}.\end{equation}
Then Macdonald's formula gives the volume of $\UK$ as 
	\begin{subequations}
	\begin{align}
		\Vol_{\rmHS}\big(\mathrm{U}(\mathbf{a})\big) & = 2^{\frac{1}{2}\sum_{l}a_{l}^{2}-\frac{\bar{a}}{2}}\prod_{l}\prod_{s=0}^{a_{l}-1}\Vol\big(S^{2s+1}\big)\\
		& = \frac{(2\pi)^{\frac{1}{2}\sum a_{l}^{2} + \frac{\bar{a}}{2}}}{\prod_{l}\prod_{s=0}^{a_{l}-1}s!}\\
		& = \prod_{l}\Vol_{\rmHS}\big(\mathrm{U}(a_{l})\big).
	\end{align}
	\label{eqn:volUnitaryProd}
	\end{subequations}
\end{proof}

With this lemma, we can now compute the volume of the critical submanifold $\Orb(\perm)$:
	\begin{subequations}
	\begin{align}
		\Vol_{\rmHS}\big(\Orb(\perm)\big) & = \frac{\Vol_{\rmHS}(\Umn)}{\Vol_{\rmHS}\big(\UK\big)} = \frac{\prod\Vol_{\rmHS}\big(\mathrm{U}(n_{i})\big)\prod\Vol_{\rmHS}\big(\mathrm{U}(m_{j})\big)}{\prod_{ij}\Vol_{\rmHS}\big(\mathrm{U}(k_{ij})\big)}\\
		& = \frac{(2\pi)^{\frac{1}{2}\sum n_{i}^{2} + \frac{1}{2}\sum m_{j}^{2} + N}}{(2\pi)^{\frac{1}{2}\sum k_{ij}^{2} + \frac{N}{2}}}\frac{\prod_{ij}\prod_{r=0}^{k_{ij}-1}r!}{\prod_{i}\prod_{p=0}^{n_{i}-1}p!\prod_{j}\prod_{q=0}^{m_{j}-1}q!}\\
		 & = (2\pi)^{\frac{d+N}{2}}\frac{\prod_{ij}\prod_{r=0}^{k_{ij}-1}r!}{\prod_{i}\prod_{p=0}^{n_{i}-1}p!\prod_{j}\prod_{q=0}^{m_{j}-1}q!},
	\end{align}
	\end{subequations}
where $d = \sum m_{j}^{2} + \sum n_{i}^{2} - \sum k_{ij}^{2}$ is the dimension of $\Orb(\perm)$.

\bigskip

\subsection{Examples}
\label{sec:volExamples}
\begin{example}[Maximum Submanifold of $P_{i\to f}$]
	For any two (non-zero) vectors $|i\rangle$ and $|f\rangle$ in $\mathbb{C}^{N}$, let $\rho = \frac{|i\rangle\langle i|}{\langle i|i\rangle}$ and $\Ob = \frac{|f\rangle\langle f|}{\langle f|f\rangle}$.  Then $J(U) = \Tr(U\rho U^{\dag}\Ob)$ represents the transition probability from initial state $|i\rangle$ to final state $|f\rangle$.  Translating the problem by diagonalizing $\rho$ and $\Ob$ and sorting the eigenvalues in decreasing order, it is found that $\rho = \Ob$ has a single "1" in the (1,1) element and zero elsewhere.  The $N^{2}-2N+2$ dimensional maximum submanifold then corresponds to the identity permutation and yields the following contingency table \cite{Wu2008} and volume:
	\begin{equation}\begin{array}{l|l|l|}
			& m_{1} = 1 & m_{2} = N-1 \\
			\hline
			n_{1} = 1 & k_{11} = 1 & k_{12} = 0\\
			\hline
			n_{2} = N-1 & k_{21} = 0 & k_{22} = N-1\\
			\hline
		\end{array} \qquad\qquad \Vol_{\rmHS}\Orb(\mathbb{I}) = \frac{(2\pi)^{\frac{1}{2}(N^{2}-N+2)}}{\prod_{p=0}^{N-2}p!}.
	\end{equation}
\end{example}
\begin{example}[Minimum Submanifold of $P_{i\to f}$]
	If, in the previous example, a permutation is used that fails to align the non-zero eigenvalues of $\rho$ and $\Ob$, then $\Orb(\perm)$ is the $N^{2}-2$ dimensional minimum submanifold, with contingency table and volume:
	\begin{equation}\begin{array}{l|l|l|}
			& m_{1} = 1 & m_{2} = N-1 \\
			\hline
			n_{1} = 1 & k_{11} = 0 & k_{12} = 1\\
			\hline
			n_{2} = N-1 & k_{21} = 1 & k_{22} = N-2\\
			\hline
		\end{array} \qquad\qquad \Vol_{\rmHS}\Orb(\perm) = \frac{(2\pi)^{\frac{1}{2}(N^{2}+N-2)}}{(N-2)!\prod_{p=0}^{N-2}p!}.
	\end{equation}
\end{example}

\begin{example}[Fully Non-degenerate $\rho$ and $\Ob$]
	In the case where $\rho$ and $\Ob$ are both fully non-degenerate, $m_{1} = \dots = m_{N} = 1$, $n_{1} = \dots = n_{N} = 1$ and $\mathbf{K} = \perm^{\dag}$.  Then for any critical submanifold $\Orb(\perm)$, all of which are $N$ dimensional tori,
	\begin{equation}\Vol_{\rmHS}\Orb(\perm) = (2\pi)^{N}.
	\end{equation}
\end{example}

\bigskip

\section{The Measure of the Near-Critical Set}
\label{sec:nearCriticalMeasure}
In this section, an estimate is derived for the measure of the near-critical set $C_{\epsilon}^{\perm} = \{U \;:\; \|\grad J(U)\|\leq \epsilon\}\subset\UN$ surrounding the critical submanifold $\Orb(\perm)$. To do that, we first approximate the near-critical set by an ellipsoidal tube about $\Orb(\perm)$.
 
\subsection{Approximating the Set of Interest}
Let $U\in\Orb(\perm)$ be a critical point of $J$, and let $X\in \uN$ be such that $\|X\|= 1$ and $UX\in \big(\rmT_{U}\Crit(J)\big)^{\perp}$, i.e. $UX\in T_{U}\UN$ is orthogonal to the null space of $\Hess_{J,U}:\rmT_{U}\UN\to\rmT_{U}\UN$ (the Hessian operator of $J$ at the point $U\in\UN$) since $J$ is Morse-Bott, the necessary properties having been established in \cite{Wu2008}.  Define $F_{X}:\mathbb{R}\rightarrow \mathbb{R}$ by \begin{equation}F_{X}(s) = \big\|\grad J\big(U\exp(sX)\big)\big\|^{2}.\end{equation}  Then $F_{X}(0) = 0$ and \begin{equation}\frac{\rmd F_{X}}{\rmd s} = 2\Big\langle \grad J\big(U\exp(sX)\big),\; \Hess_{J, U\exp(sX)}\big(U\exp(sX)X\big)\Big\rangle\end{equation} so that $(\rmd F_{X}/\rmd s)(0) = 0$, since $U\in\Crit(J)$.  Furthermore, \begin{align}\frac{\rmd^{2} F_{X}}{\rmd s^{2}} & = 2\Big\langle \Hess_{J, U\exp(sX)}\big(U\exp(sX)X\big), \; \Hess_{J, U\exp(sX)}\big(U\exp(sX)X\big)\Big\rangle\nonumber\\ & \qquad + 2\Big\langle \grad J\big(U\exp(sX)\big), \; \nabla_{U\exp(sX)X}\Hess_{J, U\exp(sX)}\big(U\exp(sX)X\big)\Big\rangle\end{align} so that $(\rmd^{2} F_{X}/\rmd s^{2})(0) = 2\|\Hess_{J,U}(UX)\|^{2}$.  Thus, for small $s$, $F_{X}(s) = s^{2}\|\Hess_{J,U}(UX)\|^{2} + \mathscr{O}(s^{3})$.  Then in order to have $\big\|\grad J\big(U\exp(sX)\big)\big\|\leq \epsilon$, we should have $F_{X}(s)\leq \epsilon^{2}$, and therefore $s^{2}\leq \frac{\epsilon^{2}}{\|\Hess_{J,U}(UX)\|^{2}} + \mathscr{O}(\epsilon^{3})$.  

Now, suppose that $\{UY_{i}\}$ are the orthonormal eigenvectors of $\Hess_{J,U}$ corresponding to non-zero eigenvalues $\{\beta_{i}\}$ (see Appendix \ref{sec:HessianStructure}).  Then each normalized $UX\in\big(\rmT_{U}\Crit(J)\big)^{\perp}$ is such that $X$ can be written $X = \sum\alpha_{i}Y_{i}$ with $\sum \alpha_{i}^{2} = 1$.  Then the condition on $s$ is that $s^{2}\leq \frac{\epsilon^{2}}{\sum \alpha_{i}^{2}\beta_{i}^{2}}$, i.e. $\sum (s\alpha_{i})^{2}\beta_{i}^{2}\leq \epsilon^{2}$, so that $(s\alpha_{1},\dots,s\alpha_{m})$ is a point in the $m$-dimensional solid ellipsoid with principal axes $\{\epsilon/|\beta_{i}|\}$.  Therefore $sX$ lies in the $m$-dimensional solid ellipsoid with principal axes $\{(\epsilon/|\beta_{i}|)Y_{i}\}$, and the set of all $U\exp(sX)$ for $sX$ in this ellipsoid is an $m$-dimensional geodesic ellipsoid in $\UN$, which we will denote by $\mathfrak{E}_{\epsilon}(U)$.  Repeating this analysis at every point $U$ of the critical submanifold $\Orb(\perm)$ and drawing together the resulting geodesic ellipsoids yields an ellipsoidal tube $\mathfrak{T}_{\epsilon} = \cup_{U\in\Orb(\perm)}\mathfrak{E}_{\epsilon}(U)$ about $\Orb(\perm)$ that approximates the set of points for which $\|\grad J\|\leq \epsilon$.  It is this tube of near-critical points whose volume we will estimate.  Figure \ref{fig:tubeCartoon} offers one simple example of such an ellipsoidal tube about the submanifold $S^{1}$ of $\mathbb{R}^{3}$.
\begin{figure}
\includegraphics[clip = true]{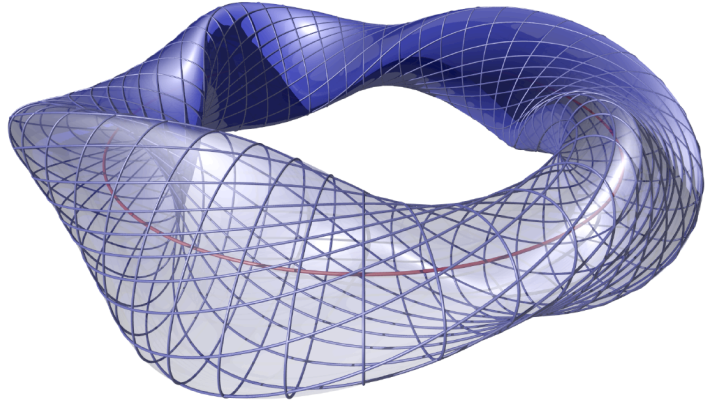}
\caption{Schematic rendering of a simple example of an ellipsoidal tube about a circular 1-dimensional submanifold of $\mathbb{R}^{3}$.  The ellipsoidal tubes considered in this paper are of a similar nature, but more complex, being tubes about higher dimensional (and more topologically and geometrically interesting) submanifolds of $\UN$.}
\label{fig:tubeCartoon}
\end{figure}

\bigskip

\subsection{Volumes of Tubes}
The study of the volumes of tubes goes back to 1939, when Hermann Weyl \cite{Weyl1939} gave the first complete description of the volumes of \emph{spherical} tubes about submanifolds of Euclidean and spherical spaces.  These volumes were presented as a finite power series in the radius of tube, with coefficients derived from the geometry of the submanfold.  This beautiful result, frequently referred to as Weyl's tube formula, has been successfully extended to certain other very special spaces (e.g. projective spaces), but appears not to have been extended to Lie groups or to the unitary group in particular.  However, we may still make use of an infinite power series \cite{Gray2004} to approximate the desired volume.



We will closely follow the notation, definitions, and conventions in \cite{Gray2004}.  Let $\exp_{\nu}:\nu\rightarrow \UN$ denote the Riemannian exponential map from the normal bundle $\nu$ of $\Orb(\perm)$ to $\UN$, i.e. $\exp_{\nu}$ takes a point $(U,V)\in \nu$ with $V\in \big(\rmT_{U}\Orb(\perm)\big)^{\perp}$, and outputs the point in $\UN$ found by following the constant speed geodesic $\xi$ with $\xi(0) = U$ and $\xi'(0) = V$ out to $\xi(1)$.  Let $y_{1},\dots,y_{\smd}$ be a local coordinate system for $\Orb(\perm)$.  Let $x_{1},\dots,x_{N^{2}}$ be the Fermi coordinates \cite{Gray2004} on $\UN$ generated by $y_{1},\dots, y_{\smd}$ and the orthonormal fields $E_{\smd+1},\dots, E_{N^{2}}$ on $\Orb(\perm)$ defined as the orthonormal eigenvectors of the Hessian of $J$ corresponding to non-zero eigenvalues, as in \eqref{eqn:eigenvectors},
\begin{equation}
	x_{i}\left(\exp_{\nu}\left(U, \, \sum_{j=\smd+1}^{N^{2}}t_{j}E_{j}(U)\right)\right) = \begin{cases}y_{i}(U) & 1\leq i\leq \smd \\ t_{i} & \smd+1\leq i\leq N^{2}\end{cases}\end{equation}Note that the vector fields $\{E_{i}\}$ track the principal axes of the ellipsoid normal to $\Orb(\perm)$.  For each $i\in\{1,\dots, N^{2}\}$, let $X_{i}$ denote the vector field $\frac{\partial}{\partial x_{i}}$.  The geodesic ellipsoidal tubular shell characterized by the ``radius'' $r$, denoted $\tilde{\mathfrak{E}}_{r}$, is described in this local coordinate chart by the set of points for which $\sum_{i=\smd+1}^{N^{2}}x_{i}^{2}\beta_{i}^{2} = r^{2}$.  Let $L\in\mathfrak{X}\big(\UN - \Orb(\perm)\big)$ denote the (outward) unit normal vector field to the shells $\tilde{\mathfrak{E}}_{r}$, let $\hat{L}$ denote the differential 1-form $\hat{L} = \langle L, \cdot \rangle$, and let $\mu$ denote the volume form of $\tilde{\mathfrak{E}}_{r}$.  Then $\hat{L}\wedge\mu = \omega$, where $\omega$ is the volume form on $\UN$.  Likewise let $L_{\nu}$ be the unit normal vector field to the $\epsilon$-ellipsoid in $\big(\rmT_{U}\Orb(\perm)\big)^{\perp}$ and let $\hat{L}_{\nu}$ be the 1-form $\hat{L}_{\nu} = \langle L_{\nu}, \cdot\rangle$.  Let $\mu_{\nu}$ denote the volume form of the ellipsoidal shell in $\big(\rmT_{U}\Orb(\perm)\big)$, so that $\hat{L}_{\nu}\wedge\mu_{\nu} = \omega_{\nu}$ where $\omega_{\nu}$ is the volume form on the normal bundle $\nu$.

We would like to compare $\exp_{\nu}^{*}(\hat{L})$, the pullback of $\hat{L}$ to $\nu$,  with $\hat{L}_{\nu}$.  To that end, observe that at a point $V = \sum_{i=\smd+1}^{N^{2}}t_{i}E_{i}$ in the normal space $\big(\rmT_{U}\Orb(\perm)\big)^{\perp}$, the tangent space to the ellipsoidal shell through $V$ is the set $\{\sum_{i=\smd+1}^{N^{2}}\tau_{i}E_{i}\}$ where $\sum_{i=\smd+1}^{N^{2}}\tau_{i}t_{i}\beta_{i}^{2} = 0$.  Then the unit normal is $L_{\nu}(U,V) = \frac{\sum_{i=\smd+1}^{N^{2}}\beta_{i}^{2}t_{i}E_{i}}{\sqrt{\sum \beta_{j}^{4}t_{j}^{2}}}\in \big(\rmT_{U}\Orb(\perm)\big)^{\perp}$.  At the corresponding point $W = \exp_{\nu}(U,V)\in\UN$, the tangent space to the geodesic ellipsoid is the set 
\begin{equation}\left\{\rmd\exp_{\nu}\left(\sum_{i=\smd+1}^{N^{2}}\tau_{i}E_{i}\right)\right\} = \left\{\sum_{i=\smd+1}^{N^{2}} \tau_{i}X_{i}\right\}\label{eqn:tangentspacegeodellipsoid}\end{equation}
where $\sum_{i=\smd+1}^{N^{2}}\tau_{i}t_{i}\beta_{i}^{2} = 0$.  Then $L(W)$ is the unit vector in $\rmT_{W}\mathrm{U}(N)$ normal to this space, i.e., $\sum_{i=\smd+1}^{N^{2}}\tau_{i}\langle L, X_{i}\rangle = 0$ for all $\{\tau_{i}\}$ such that $\sum_{i=\smd+1}^{N^{2}}\tau_{i}t_{i}\beta_{i}^{2} = 0$.  This implies that $\langle L, X_{i}\rangle = c t_{i}\beta_{i}^{2}$ for all $i\in\{\smd+1, \dots, N^{2}\}$ and some fixed normalization constant $c$.  Let $Z\in \big(\rmT_{U}\Orb(\perm)\big)^{\perp}$ be perpendicular to $L$ [i.e., tangent to the ellipsoid through $(U,V)$].  Then by \eqref{eqn:tangentspacegeodellipsoid} $\rmd \exp_{\nu}(Z)$ is tangent to the geodesic ellipsoid through $W$, so that $\langle (\rmd\exp_{\nu})^{*}(L), Z\rangle = \langle L, \rmd \exp_{\nu}(Z)\rangle = 0$, meaning that $(\rmd\exp_{\nu})^{*}(L) = c L_{\nu}$, where it can be shown that $c = \|(\rmd\exp_{\nu})^{*}(L)\| = 1 + O(\epsilon^{2})$.  Hence $\hat{L}_{\nu} = (1+O(\epsilon^{2}))\exp_{\nu}^{*}(\hat{L})$.  

With the above material in mind, and taking $\rmd V(U)$ to be the volume measure of the ellipsoid within $\big(\rmT_{U}\Orb(\perm)\big)^{\perp}$ and $\rmd P$ to be the volume measure of $\Orb(\perm)$, we find that the pull-back of $\omega$ is \cite{Gray2004} 
\begin{subequations}
\begin{align}
	\exp_{\nu}^{*}(\hat{L})\wedge\exp_{\nu}^{*}(\mu) & = \exp_{\nu}^{*}(\omega)(U,V) = \omega(X_{1},\dots,X_{N^{2}})\big(\exp_{\nu}(U,V)\big)\omega_{\nu}(U,V)\nonumber\\
	 & = \omega(X_{1},\dots,X_{N^{2}})\big(\exp_{\nu}(U,V)\big)L_{\nu}\wedge\mu_{\nu}(U,V)\\
	 & = \omega(X_{1},\dots,X_{N^{2}})\big(\exp_{\nu}(U,V)\big)L_{\nu}\wedge\rmd V\wedge\rmd P\\
	 & = \big(1+O(\epsilon^{2})\big)\omega(X_{1},\dots,X_{N^{2}})\big(\exp_{\nu}(U,V)\big)\exp_{\nu}^{*}(\hat{L})\wedge\rmd V\wedge\rmd P
\end{align}  
\end{subequations}
and therefore \begin{equation}\exp_{\nu}^{*}(\mu) = \big(1+O(\epsilon^{2})\big)\omega(X_{1},\dots,X_{N^{2}})\big(\exp_{\nu}(U,V)\big)\rmd V\wedge\rmd P.\end{equation}

Now, it was shown in \cite{Gray2004} that \begin{equation}\omega(X_{1},\dots,X_{N^{2}}) = 1 - \sum_{i=\smd+1}^{N^{2}}\langle H, X_{i}\rangle x_{i} + \text{higher order terms},\end{equation} where $H$ is a section of the normal bundle $\nu$ over $\Orb(\perm)$ called the mean curvature field.  The first order term in the above expression for $\omega$ will integrate to zero (as will all odd order terms) due to the symmetry of the ellipse.  Then, the area of the geodesic ellipsoidal shell $\tilde{\mathfrak{E}}_{\epsilon}$ is given by
\begin{subequations}
\begin{align}
	\Area(\tilde{\mathfrak{E}}_{\epsilon}) & = \int_{\tilde{\mathfrak{E}}_{\epsilon}}\rmd\mu = \int_{\mathfrak{E}_{\epsilon}}\exp_{\nu}^{*}(\rmd\mu)\\
	& = \int_{\Orb(\perm)}\int_{\Ellipse(\epsilon)}(1+O(\epsilon)^{2})\omega(X_{1},\dots,X_{N^{2}})\rmd V\wedge\rmd P\\
	& = (1+O(\epsilon^{2}))\epsilon^{N^{2}-\smd-1}\int_{\Orb(\perm)}\int_{\Ellipse(1)}\omega(X_{1},\dots,X_{N^{2}})\rmd V\wedge\rmd P\\
	& = \epsilon^{N^{2}-\smd-1}\Vol_{\rmHS}\big(\Orb(\perm)\big)\Vol(\Ellipse(1)) + O(\epsilon^{N^{2}-\smd+1})
\end{align}
\end{subequations}
where $\Ellipse(r)$ is the ellipse with principal axes $r/|\beta_{i}|$.

So we conclude that for small enough $\epsilon>0$, the volume of the ellipsoidal tube about $\Orb(\perm)$ is 
\begin{subequations}
\begin{align}
	\Vol(\mathfrak{T}_{\epsilon}) & = \int_{0}^{\epsilon}\Area(\tilde{\mathfrak{E}}_{r})\rmd r\\
	& = \frac{\epsilon^{N^{2}-\smd}}{N^{2}-\smd}\Vol_{\rmHS}\big(\Orb(\perm)\big)\Vol(\Ellipse(1)) + O(\epsilon^{N^{2}-\smd+2})\\
	& = \frac{2^{\frac{d+N}{2}}\pi^{\frac{N(N+1)}{2}}\epsilon^{N^{2}-\smd}}{\Gamma\big(\frac{N^{2} - \smd}{2} + 1\big)\prod|\beta_{i}|}\times\frac{\prod_{ij}\prod_{r=0}^{k_{ij}-1}r!}{\prod_{i}\prod_{p=0}^{n_{i}-1}p!\prod_{j}\prod_{q=0}^{m_{j}-1}q!} + O\big(\epsilon^{N^{2} - \smd + 2}\big),
\end{align}
\end{subequations}
where $\smd = \sum n_{i}^{2} + \sum m_{j}^{2} \sum k_{ij}^{2} = \dim\big(\Orb(\perm)\big)$.  Then the volume fraction of the tube within $\UN$ is 
\begin{subequations}
\begin{align}
	\VolFrac(\mathfrak{T}_{\epsilon}) & = \frac{\Vol(\mathfrak{T}_{\epsilon})}{\Vol_{\rmHS}\big(\UN\big)} = \frac{\prod_{s=0}^{N-1}s!}{(2\pi)^{\frac{N(N+1)}{2}}}\Vol(\mathfrak{T}_{\epsilon})\\
	& = \frac{ \epsilon^{N^{2} - \smd}}{2^{\frac{N^{2} - \smd}{2}}\Gamma\big(\frac{N^{2} - \smd}{2} + 1\big)\prod|\beta_{i}|}\times\frac{\prod_{s=0}^{N-1}s!\prod_{ij}\prod_{r=0}^{k_{ij}-1}r!}{\prod_{i}\prod_{p=0}^{n_{i}-1}p!\prod_{j}\prod_{q=0}^{m_{j}-1}q!} + O\big(\epsilon^{N^{2} - \smd + 2}\big).\label{eqn:estVolFrac}
\end{align}
\end{subequations}

\bigskip

\subsection{Examples}
\label{sec:volFracExamples}
We now return to the examples considered in Section \ref{sec:volExamples} and compute the volume fractions of the corresponding near-critical sets.
\begin{example}[Maximum Submanifold of $P_{i\to f}$]
	The Hessian of $P_{i\to f}$ on the maximum submanifold has rank $2N-2$, and all of the nonzero eigenvalues are $\beta_{i}=-1$.  Then, the volume of $\Ellipse(1)$ with principal axes $1/|\beta_{i}|$ is just the volume of the unit sphere $S^{2N-2}$, which is $(2N-1)\frac{2^{N}\pi^{N-1}}{(2N-1)!!} = \frac{2^{2N-1}\pi^{N-1}(N-1)!}{(2N-2)!}$ (the double factorial $(2N-1)!!$ is defined as the product of the odd integers from 1 to $2N-1$).  Using the volume computed in Section \ref{sec:volExamples} for $\Orb(\mathbb{I})$, we find
	\begin{subequations}
	\begin{align}
		\VolFrac(\mathfrak{T}_{\epsilon}) & = \frac{\epsilon^{2N-2}}{2N-2}\frac{\Vol_{\rmHS}\Orb(\mathbb{I})\Vol(\Ellipse(1))}{\Vol_{\rmHS}\big(\UN\big)} + O(\epsilon^{2N})\\
		& = \frac{\epsilon^{2N-2}}{2N-2}\frac{(2\pi)^{\frac{1}{2}(N^{2}-N+2)}}{\prod_{p=0}^{N-2}p!}\frac{2^{2N-1}\pi^{N-1}(N-1)!}{(2N-2)!}\frac{\prod_{s=0}^{N-1}s!}{(2\pi)^{\frac{N(N+1)}{2}}} + O(\epsilon^{2N})\\
		& = \epsilon^{2N-2}\frac{2^{N-1}(N-1)!(N-2)!}{(2N-2)!} + O(\epsilon^{2N})\\
		& = \epsilon^{2N-2}\frac{(N-2)!}{(2N-3)!!} + O(\epsilon^{2N}),
	\end{align}
	\end{subequations}
where \begin{equation}\left(\frac{1}{2}\right)^{N-3}\frac{1}{2N-3} < \frac{(N-2)!}{(2N-3)!!} \leq \left(\frac{2}{3}\right)^{N-3}\frac{1}{2N-3}\end{equation} for $N>2$.
\end{example}
\begin{example}[Minimum Submanifold of $P_{i\to f}$]
	The Hessian of $P_{i\to f}$ on the minimum submanifold has rank $2$, with both nonzero eigenvalues equal to one.  So the volume of $\Ellipse(1)$ is just the volume of the unit sphere $S^{2}$, which is $4\pi$.  Using the volume computed in Section \ref{sec:volExamples} for $\Orb(\perm)$, we find		
	\begin{subequations}
	\begin{align}
		\VolFrac(\mathfrak{T}_{\epsilon}) & = \frac{\epsilon^{2}}{2}\frac{\Vol_{\rmHS}\Orb(\perm)\Vol(\Ellipse(1))}{\Vol_{\rmHS}\big(\UN\big)} + O(\epsilon^{4})\\
		& = \frac{\epsilon^{2}}{2}\frac{(2\pi)^{\frac{1}{2}(N^{2}+N-2)}}{(N-2)!\prod_{p=0}^{N-2}p!}(4\pi)\frac{\prod_{s=0}^{N-1}s!}{(2\pi)^{\frac{N(N+1)}{2}}} + O(\epsilon^{4})\\
		& = (N-1)\epsilon^{2} + O(\epsilon^{4}).
	\end{align}
	\end{subequations}
\end{example}

\begin{example}[Fully Non-degenerate $\rho$ and $\Ob$]
	In the case where $\rho$ and $\Ob$ are both fully non-degenerate, $m_{1} = \dots = m_{N} = 1$, $n_{1} = \dots = n_{N} = 1$ and $\mathbf{K} = \perm^{\dag}$.  Then for any critical submanifold $\Orb(\perm)$, all of which are $N$ dimensional tori,
	\begin{subequations}
	\begin{align}
		\VolFrac(\mathfrak{T}_{\epsilon}) & = \frac{\epsilon^{N^{2}-N}}{N^{2}-N}\frac{\Vol_{\rmHS}\Orb(\perm)\Vol(\Ellipse(1))}{\Vol_{\rmHS}\big(\UN\big)} + O(\epsilon^{N^{2}-N+2})\\
		& = \frac{\epsilon^{N^{2}-N}\pi^{\frac{N^{2}-N}{2}}(2\pi)^{N}}{\left(\frac{N^{2}-N}{2}\right)!\prod |\beta_{i}|}\frac{\prod_{s=0}^{N-1}s!}{(2\pi)^{\frac{N(N+1)}{2}}} + O(\epsilon^{N^{2}-N+2})\\
		& = \frac{\prod_{s=0}^{N-1}s!}{2^{\frac{N(N-1)}{2}}\left(\frac{N^{2}-N}{2}\right)!\prod |\beta_{i}|}\epsilon^{N^{2}-N} + O(\epsilon^{N^{2}-N+2})
	\end{align}
	\end{subequations}
where $\prod|\beta_{i}|$ is the product of the $N^{2}-N$ nonzero Hessian eigenvalues, and where \begin{equation}\frac{1}{\prod_{s=1}^{N-1}(s^{2}-s+2)^{s}} \leq \frac{\prod_{s=0}^{N-1}s!}{2^{\frac{N(N-1)}{2}}\left(\frac{N^{2}-N}{2}\right)!} \leq \frac{1}{\prod_{s=1}^{N-1}(s+1)^{s}} = \prod_{s=1}^{N-1}\frac{s!}{N!}.\end{equation}
\end{example}

\bigskip

\section{Asymptotic Analysis}
\label{sec:asymptoticAnalysis}
The expression in \eqref{eqn:estVolFrac} provides a means of estimating the volume fraction of any given near-critical set.  However, for probing the general asympotic behavior of these volumes as the dimension $N$ of the state space rises, this estimate is inadequate since it only holds for small enough $\epsilon >0$, where ``small enough'' is determined for each system and each dimension $N$.  Before seeking a new expression of practical utility, it is necessary to define the parameters of the desired asymptotic analysis.  Fix some $N_{0}>0$, $N_{0}\times N_{0}$ density matrix $\rho^{0}$, and $N_{0}\times N_{0}$ Hermitian observable operator $\Ob^{0}$.  Then for any $z\in\mathbb{N}$, let $N_{z} = N_{0}+z$, $\rho^{z} = \rho^{0}\oplus 0_{z}$, and $\Ob^{z} = \Ob^{0}\oplus 0_{z}$, where $0_{z}$ is the $z\times z$ zero matrix.  Then each critical value of the kinematic landscape $J_{z}(U) = \Tr(U\rho^{z}U^{\dag}\Ob^{z})$ is also a critical value of $J_{z+1}$, so each critical submanifold of $J_{z}$ has a direct analog in $J_{z+1}$.  In this fashion, one can decribe an infinite sequence of critical submanifolds as $z\to\infty$ and consider the asymptotic behavior of the volume fractions of the near-critical sets around these critical submanifolds.  It will be argued that these volume fractions converge to zero as $z\to\infty$.  The landscape $J_{z}$ for $z>N_{0}$ has the same number of critical submanifolds as $J_{N_{0}}$, so if the volume fractions of the individual near-critical sets converge to zero, then the total near-critical volume fraction of $J_{z}$ also converges to zero as $z\to \infty$.  

Because compact Lie groups with bi-invariant metrics have non-negative sectional curvature \cite{doCarmo1992}, the following comparison theorem proved in \cite[Ch. 8]{Gray2004} may be used to bound the volume of spherical tubes about a submanifold $P$.
\begin{theorem}Let $M$ be an $n$-dimensional Riemannian manifold with non-negative sectional curvature.  Then for any $\smd$-dimensional submanifold $P\subset M$ and all $r\geq 0$, the volume of the spherical tube of radius $r$ about $P$ in $M$ is bounded as \begin{equation}V_{P}^{M}(r)\leq \int_{0}^{r}\int_{P}\int_{S^{n-\smd-1}}t^{n-\smd-1}\max\left(\left(1-\frac{t}{\smd}\langle H, u\rangle\right)^{\smd},0\right)\rmd u\; \rmd P\; \rmd t,\end{equation}where $H$ is the mean curvature vector field.\end{theorem}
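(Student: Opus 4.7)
The plan is to use the standard approach of parametrizing the tubular neighborhood by the normal exponential map $\exp^{\perp}:\nu(P)\to M$ and estimating its Jacobian via Jacobi field comparison. First I would observe that the spherical tube of radius $r$ is the image of the $r$-disk subbundle of $\nu(P)$, and that upon introducing radial-spherical coordinates $(p,t,u)$ with $p\in P$, $t\in[0,r]$, and $u\in S^{n-\smd-1}\subset\nu_{p}(P)$, the Riemannian volume form of $M$ pulls back to
\begin{equation}
(\exp^{\perp})^{*}\omega=\theta(p,t,u)\,t^{n-\smd-1}\,\rmd t\wedge \rmd u\wedge \rmd P,
\end{equation}
where $\theta$ is the volume density measuring how rapidly the normal geodesic fan opens compared to the flat cone.

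Next I would derive the explicit form of $\theta$ by assembling an orthonormal frame of Jacobi fields along the normal geodesic $\gamma(t)=\exp_{p}(tu)$. The $\smd$ Jacobi fields tracking variations of the basepoint within $P$ satisfy $J_{i}(0)=e_{i}$ and $J_{i}'(0)=-S_{u}e_{i}$, where $\{e_{i}\}$ is an orthonormal basis of $\rmT_{p}P$ and $S_{u}$ is the shape operator in direction $u$; the remaining fields encode spherical variation in the normal directions and contribute the $t^{n-\smd-1}$ factor. A standard calculation then expresses $\theta(p,t,u)$ as the determinant of the ``$P$-part'' of these fields, which in the flat Euclidean model $\mathbb{R}^{n}$ collapses exactly to $\det(\mathbb{I}-tS_{u})$.

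The heart of the argument is a Rauch-type comparison for submanifolds (equivalently, a matrix Riccati inequality of Heintze--Karcher type): because the sectional curvatures of $M$ are non-negative, each Jacobi field in $M$ is majorized by its Euclidean counterpart, and therefore $\theta(p,t,u)\le\det(\mathbb{I}-tS_{u})$ up to the first focal point along $\gamma$. To convert this determinantal bound into the stated trace bound, I diagonalize $S_{u}$ with eigenvalues $\lambda_{1},\dots,\lambda_{\smd}$ and apply the AM--GM inequality:
\begin{equation}
\det(\mathbb{I}-tS_{u})=\prod_{i=1}^{\smd}(1-t\lambda_{i})\le\left(\frac{1}{\smd}\sum_{i}(1-t\lambda_{i})\right)^{\!\smd}=\left(1-\frac{t}{\smd}\langle H,u\rangle\right)^{\!\smd},
\end{equation}
which is valid whenever every factor $1-t\lambda_{i}$ is non-negative, and uses the identification $\langle H,u\rangle=\Tr(S_{u})$. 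Beyond the first focal point some factor becomes negative and $\exp^{\perp}$ ceases to be a local diffeomorphism, so the tube volume receives no further new contribution there; the $\max(\cdot,0)$ truncation in the stated bound absorbs exactly this effect, yielding an honest upper bound because the right-hand integrand, beyond focal radii, either vanishes or overcounts a region already captured in the tube.

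The principal obstacle is the submanifold Rauch comparison $\theta\le\det(\mathbb{I}-tS_{u})$. Pointwise Rauch is not quite enough, since one must control the \emph{determinant} rather than individual Jacobi field norms; the cleanest route is to derive the Riccati equation for the shape operator of the geodesic sphere of ``radius'' $t$ about $p$, and then invoke an operator-monotonicity argument comparing it with the corresponding flat equation, as carried out in \cite{Gray2004}. A secondary but genuine subtlety is the rigorous handling of focal cut-offs so that the resulting estimate remains a true upper bound on $V_{P}^{M}(r)$; this is precisely what the $\max(\cdot,0)$ is designed to enforce.
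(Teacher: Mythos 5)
The paper does not prove this theorem; it cites it directly from Gray's monograph on tubes (Ch.~8 of \cite{Gray2004}), where it is established by exactly the Heintze--Karcher route you sketch: parametrize the tube by the normal exponential map, express the pulled-back volume form as $\theta(p,t,u)\,t^{n-\smd-1}\,\rmd t\wedge\rmd u\wedge\rmd P$, bound $\theta$ by its Euclidean model $\det(\mathbb{I}-tS_{u})$ via a matrix Riccati comparison under $K_{M}\geq 0$, and pass from the determinant to the trace by AM--GM using $\langle H,u\rangle=\Tr(S_{u})$. Your proposal is therefore correct in substance and follows essentially the same argument as the cited source; you rightly flag the two genuine technical points (that a determinantal, not merely pointwise, Jacobi comparison is needed, and that the $\max(\cdot,0)$ cutoff must be justified against focal and cut loci), both of which Gray handles carefully in the reference.
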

Since the submanifolds of $\UN$ considered in this paper are all minimal ($H = 0$), this bound reduces to \begin{equation}V_{\Orb(\perm)}^{\UN}(r)\leq \frac{r^{N^{2}-\smd}}{N^{2}-\smd}\Vol\big(S^{N^{2}-\smd-1}\big)\Vol\big(\Orb(\perm)\big).\end{equation} Of course, any ellipsoidal tube with longest principal axis $\epsilon/|\beta_{\min}|$ is contained in the spherical tube with radius $r = \epsilon/|\beta_{\min}|$.  In the sequence of density matrices and observable operators $\rho^{z}$ and $\Ob^{z}$ described above, once $z>N_{0}$, the set of distinct Hessian eigenvectors for a given critical submanifold does not change with $z$.  Then $\beta_{\min}$ is fixed and the $\epsilon$-ellipsoidal tube about the critical submanifold is contained in the spherical tube with radius $r = \epsilon/|\beta_{\min}|$, so the volume of the $\epsilon$-ellipsoidal tube is bounded by 
\begin{equation}\Vol(\mathfrak{T}_{\epsilon})\leq \frac{\epsilon^{N^{2}-\smd}}{(N^{2}-\smd)|\beta_{\min}|^{N^{2}-\smd}}\Vol\big(S^{N^{2}-\smd-1}\big)\Vol\big(\Orb(\perm)\big),\end{equation} 
so that 
\begin{subequations}
\begin{align}
	\VolFrac(\mathfrak{T}_{\epsilon}) & \leq \frac{\epsilon^{N^{2}-\smd}}{(N^{2}-\smd)|\beta_{\min}|^{N^{2}-\smd}}\frac{\Vol\big(S^{N^{2}-\smd-1}\big)\Vol\big(\Orb(\perm)\big)}{\Vol(\UN)}\\
	&  = \frac{\epsilon^{N^{2}-\smd}}{2^{(N^{2}-\smd)/2}\left(\frac{N^{2}-\smd}{2}\right)!|\beta_{\min}|^{N^{2}-\smd}}\frac{\prod_{s=0}^{N-1}s!\prod_{ij}\prod_{r=0}^{k_{ij}-1}r!}{\prod_{i}\prod_{p=0}^{n_{i}-1}p!\prod_{j}\prod_{q=0}^{m_{j}-1}q!}\label{eqn:ellipsoidalTubeBound}
\end{align}
\end{subequations}
where $\smd = \sum m_{i}^{2} + \sum n_{j}^{2} - \sum k_{ij}^{2}$ is the dimension of the critical submanifold.

Now, for $z>N_{0}$, the set of critical values does not change with $z$, so fix a critical value $v$ and consider the contingency table associated with the corresponding critical submanifold.  As $z$ increases, the only elements in the table that will change are $m_{s}$, the multiplicity of the zero eigenvalue of $\rho$, $n_{r}$, the multiplicity of the zero eigenvalue of $\Ob$, and $k_{sr}$, the degree of overlap between the zero eigenvalues of $\Lambda$ and $\perm^{\dag}\Sigma\perm$, where $\Lambda = \Omega\rho\Omega^{\dag}$ and $\Sigma = \Gamma\mathcal{O}\Gamma^{\dag}$ are diagonalizations of $\rho$ and $\Ob$ with decreasing elements.  For $z>N_{0}$, these indices change as $m_{s}(z+1) = m_{s}(z) + 1$, $n_{r}(z+1) = n_{r}(z)+1$, and $k_{sr}(z+1) = k_{sr}(z)+1$.  Then $\smd(z+1) = \smd(z) + 2(m_{s}(z) + n_{r}(z) - k_{sr}(z)) + 1 = \smd(z) + 2(m_{s}(N_{0}) + n_{r}(N_{0}) - k_{sr}(N_{0})) + 2(z-N_{0}) + 1$.  So $N_{z+1}^{2}-\smd(z+1) = N_{z}^{2}-\smd(z) + 2(2N_{0} - m_{s}(N_{0}) - n_{r}(N_{0}) + k_{sr}(N_{0}))$.  Let $D^{z}(\epsilon)$ denote the right-hand side of \eqref{eqn:ellipsoidalTubeBound} for $z>N_{0}$.  Then let 
\begin{subequations}
\begin{align}
	F^{z}(\epsilon) & = \frac{D^{z+1}(\epsilon)}{D^{z}(\epsilon)} = \frac{\epsilon^{2\zeta}}{2^{\zeta}|\beta_{\min}|^{2\zeta}}\frac{\left(\frac{N_{z}^{2}-\smd(z)}{2}\right)!}{\left(\frac{N_{z}^{2}-\smd(z)}{2} + \zeta\right)!}\frac{N_{z}!k_{sr}(z)!}{n_{r}(z)!m_{s}(z)!}\\
	& = \frac{\epsilon^{2\zeta}}{2^{\zeta}|\beta_{\min}|^{2\zeta}}\frac{\left(\frac{4N_{0}^{2}-\smd(N_{0})}{2} + (z-N_{0})\zeta\right)!}{\left(\frac{4N_{0}^{2}-\smd(N_{0})}{2} + (z+1-N_{0})\zeta\right)!}\frac{N_{z}!k_{sr}(z)!}{n_{r}(z)!m_{s}(z)!}
\end{align}
\end{subequations}
where $\zeta = 2N_{0} - m_{s}(N_{0}) - n_{r}(N_{0}) + k_{sr}(N_{0})\geq 0$ since if $\nu_{\rho}$ and $\nu_{\Ob}$ are the nullities of $\rho^{0}$ and $\Ob^{0}$, then $m_{s}(N_{0}) = N_{0} + \nu_{\rho}$, $n_{r}(N_{0}) = N_{0} + \nu_{\Ob}$, and $k_{sr}(N_{0})\geq \nu_{\rho} + \nu_{\Ob}$.  Moreover, note that if $\zeta = 0$, then $k_{sr}(N_{0}) = \nu_{\rho} + \nu_{\Ob}$ and the critical value under consideration must be $v=0$, so for any critical value $v\neq 0$, $\zeta >0$.  Let \begin{equation}G^{z} = \frac{F^{z}(\epsilon)}{F^{z-1}(\epsilon)} = \frac{\left(\frac{4N_{0}^{2}-\smd(N_{0})}{2} + (z-1-N_{0})\zeta\right)!}{\left(\frac{4N_{0}^{2}-\smd(N_{0})}{2} + (z+1-N_{0})\zeta\right)!}\frac{(N_{0}+z)(k_{sr}(N_{0})-N_{0}+z)}{(n_{r}(N_{0})-N_{0}+z)(m_{s}(N_{0})-N_{0}+z)},\end{equation} where  \begin{equation}\frac{(N_{0}+z)(k_{sr}(N_{0})-N_{0}+z)}{n_{r}(N_{0})-N_{0}+z)(m_{s}(N_{0})-N_{0}+z)} = 1 + \zeta\frac{1}{z} + O(z^{-2}).\label{eqn:zetaFracSeries}\end{equation} Then whenever $\zeta>0$, the expression in \eqref{eqn:zetaFracSeries} converges to 1 for large $z$, so that $G^{z}\searrow 0$ as $z\to\infty$.  Therefore $F^{z}(\epsilon)\searrow 0$ and consequently $D^{z}(\epsilon)\searrow 0$ as $z\to\infty$.  Moreover, it may be seen that $G^{z} = O(z^{-2\zeta})$, so that $F^{z}(\epsilon) = O((z!)^{-2\zeta})$ and $D^{z}(\epsilon) = O(\prod_{s = 1}^{z-1}(s!)^{-2\zeta})$.  So the volume fraction of the spherical tube of radius $\epsilon/|\beta_{\min}|$ converges to zero as  $z\to \infty$ and finally we may conclude that the volume fraction of the ellipsoidal tube approximating the $\epsilon$ near-critical set about a critical submanifold with critical value $v\neq 0$ also converges to zero as $z\to\infty$.  Since this convergence proceeds very quickly as the negative power of a product of factorials, it is independent of the slower exponential contribution from $\epsilon^{N^{2}-\smd}$, and therefore independent of the choice of $\epsilon$.

This last result demonstrates the convergence of the volume fractions of the approximating ellipsoidal tubes.  To improve on this and show the convergence of the volume fractions of the near-critical sets themselves, it will be necessary to make use of the following conjecture.  Evidence in support of the conjecture is presented in Appendix \ref{sec:supportConj}.

\begin{conjecture}
\label{conj:gradientBound}
Let $U\in\UN$ be a critical point of $J(U) = \Tr(U\rho U^{\dag}\Ob)$, and let $A\in\un$ be such that the tangent vector $UA\in\rmT_{U}\UN$ is of unit length and lies normal to the critical submanifold through $U$.  Define $f:[0,\pi/(2\sqrt{2})]\to\mathbb{R}$ to be the norm squared of the gradient of $J$ along the unit speed geodesic in the direction $UA$, i.e. \begin{equation}f(s) = \big\|\grad J(U\exp(sA))\big\|^{2} = \big\|[\exp(-sA)U^{\dag}\Ob U\exp(sA), \rho]\big\|^{2} = \big\|[U^{\dag}\Ob U, \exp(sA)\rho\exp(-sA)]\big\|^{2},\end{equation} and let $\beta_{\min}$ be the minimum (in absolute value) nonzero eigenvalue of the Hessian of $J$ at $U$.  Then $f(s)\geq \beta_{\min}^{2}\sin^{2}(\sqrt{2}s)/2$ for all $0\leq s\leq \pi/(2\sqrt{2})$.
\end{conjecture}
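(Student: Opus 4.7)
The strategy is to reformulate $f$ cleanly and then reduce the conjecture to a single differential inequality via an ODE comparison. Set $B := U^{\dag}\Ob U$, which commutes with $\rho$ by criticality of $U$. Cyclic invariance of the trace yields the equivalent expression
\begin{equation*}
f(s) = \big\|[\tilde{B}(s),\rho]\big\|_{\rmHS}^{2}, \qquad \tilde{B}(s) := e^{-sA}Be^{sA},
\end{equation*}
in which $\rho$ is fixed while $\tilde{B}$ traces a geodesic in the unitary orbit of $B$. The conjectured bound $g(s) := \tfrac{1}{2}\beta_{\min}^{2}\sin^{2}(\sqrt{2}s) = \tfrac{1}{4}\beta_{\min}^{2}\big(1-\cos(2\sqrt{2}s)\big)$ is the unique solution of the linear ODE $g'' + 8g = 2\beta_{\min}^{2}$ with $g(0) = g'(0) = 0$. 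Since $f(0) = f'(0) = 0$ at a critical point (as in Section \ref{sec:nearCriticalMeasure}), the difference $h := f - g$ satisfies $h(0) = h'(0) = 0$, and variation of parameters gives
\begin{equation*}
h(s) = \int_{0}^{s} \frac{\sin\!\big(2\sqrt{2}(s-t)\big)}{2\sqrt{2}}\,\big(h''(t) + 8 h(t)\big)\,\rmd t.
\end{equation*}
The kernel is non-negative throughout $s\in[0,\pi/(2\sqrt{2})]$, so the entire conjecture reduces to the differential inequality
\begin{equation*}
f''(s) + 8 f(s) \geq 2\beta_{\min}^{2} \qquad \text{on } [0,\pi/(2\sqrt{2})].
\end{equation*}

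To attack this inequality, I would first reduce to the standard form in which $\rho$ and $B$ are simultaneously diagonal, which is permissible by the bi-invariance of the Hilbert--Schmidt metric. In this basis the normal space to $\Orb(\perm)$ at $U$ decomposes into mutually orthogonal two-dimensional ``root'' subspaces $V_{ij}$ indexed by pairs $(i,j)$ with $\rho_{i}\neq\rho_{j}$ and $B_{ii}\neq B_{jj}$, on each of which the Hessian of $J$ acts as the scalar $\beta_{ij}$ satisfying $\beta_{ij}^{2} = (\rho_{i}-\rho_{j})^{2}(B_{ii}-B_{jj})^{2}$. Decompose $A=\sum_{(i,j)}A_{ij}$ with $A_{ij}\in V_{ij}$ and set $\alpha_{ij}=\|A_{ij}\|$, so that $\sum\alpha_{ij}^{2}=1$. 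In the tractable case where the active $A_{ij}$ act on mutually disjoint index blocks, the factors $\exp(s\alpha_{ij}A_{ij})$ commute, the problem decouples into independent $N=2$ computations, and direct calculation yields $f(s) = \sum_{(i,j)}\tfrac{1}{2}\beta_{ij}^{2}\sin^{2}(\sqrt{2}\alpha_{ij}s)$. A trigonometric identity then gives
\begin{equation*}
f''(s) + 8 f(s) = 2\sum_{(i,j)}\beta_{ij}^{2}\big(1 - (1-\alpha_{ij}^{2})\cos(2\sqrt{2}\alpha_{ij}s)\big) \geq 2\sum_{(i,j)}\beta_{ij}^{2}\alpha_{ij}^{2} \geq 2\beta_{\min}^{2},
\end{equation*}
with equality achieved by the extremal single-root direction having $\beta_{ij}=\beta_{\min}$, which is what makes the bound sharp.

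The principal obstacle is the \emph{overlapping} case, where $A$ has components in root spaces that share an index; the corresponding $A_{ij}$ no longer commute and $f$ does not decouple into independent sinusoids. My plan would be to expand $\tilde{B}(s)$ as the Hadamard series $\sum_{k\geq 0}(-s)^{k}\ad_{A}^{k}(B)/k!$ and exploit the skew-adjointness of $\ad_{A}$ on the real Hilbert space of Hermitian matrices: diagonalising $\ad_{A}$ decomposes $\tilde{B}(s)-B$ into oscillating modes whose frequencies are differences of eigenvalues of $iA$, and one then seeks to write $f''+8f$ as a sum of squared Hermitian-matrix norms plus manifestly non-negative cross-terms, with the diagonal contributions reproducing the $2\beta_{\min}^{2}$ baseline by a Cauchy--Schwarz-type estimate rooted in the definition of $\beta_{\min}$. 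Controlling the interference arising from non-commuting overlapping blocks is the genuinely hard step; it is precisely this combinatorial difficulty that leads the paper to state the bound as Conjecture \ref{conj:gradientBound} with supporting evidence deferred to Appendix \ref{sec:supportConj}.
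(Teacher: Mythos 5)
Keep in mind that the statement you were asked about is a \emph{conjecture}: the paper itself offers no proof, only supporting evidence in Appendix \ref{sec:supportConj}, namely (i) an exact evaluation of $f$ for directions $A = \sum_{l}\alpha_{l}A_{l}$ whose root components involve $2L$ pairwise distinct indices, giving $f(s)=\sum_{l}\beta_{j_{l}k_{l}}^{2}\sin^{2}(\sqrt{2}\alpha_{l}s)/2$, (ii) the pointwise inequality $\sin^{2}(\sqrt{2}\alpha_{l}s)\geq\alpha_{l}^{2}\sin^{2}(\sqrt{2}s)$ on $[0,\pi/(2\sqrt{2})]$ proved by a second-derivative argument on $q(s)=\sin^{2}(\sqrt{2}\alpha_{l}s)-\alpha_{l}^{2}\sin^{2}(\sqrt{2}s)$, plus (iii) a partial critical-point analysis of $\zeta_{s}$ and extensive numerics. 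Your proposal covers exactly the same special case (disjoint index blocks) and honestly leaves open exactly the same case (overlapping blocks), so it is evidence of comparable strength, not a proof --- which is consistent with the status of the statement. Within the disjoint case your route differs from the paper's: you reduce the bound to the differential inequality $f''+8f\geq 2\beta_{\min}^{2}$ via variation of parameters and positivity of the kernel $\sin\bigl(2\sqrt{2}(s-t)\bigr)$ on $[0,\pi/(2\sqrt{2})]$, and then verify it from the decoupled sinusoid formula, whereas the paper bounds $f$ directly term by term using $q(s)\geq 0$; both recover sharpness for a single-root direction with $\beta=\beta_{\min}$. One caution about your proposed continuation: the reduction to $f''+8f\geq 2\beta_{\min}^{2}$ is only a \emph{sufficient} condition, and in the overlapping (non-commuting) case this pointwise inequality may be strictly stronger than the conjectured integrated bound, so committing to it could fail even if the conjecture is true; the paper's own evidence (e.g.\ the critical-point analysis of $\zeta_{s}$ in \eqref{eqn:gradzeta}--\eqref{eqn:constrainedGradzeta}) suggests instead attacking the minimization of $f(s)$ over the unit normal sphere directly. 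Also note that your exact decoupling requires not just that the active root spaces ``act on disjoint index blocks'' in a loose sense but that the index \emph{pairs} $(j_{l},k_{l})$ be pairwise disjoint as in the paper, since sharing even one index destroys commutativity of the $A_{ij}$.
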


With this conjecture, once $z>N_{0}$, the set of Hessian eigenvalues no longer changes with $z$, so there is one fixed $\beta_{\min}$ for all $z>N_{0}$.  Then for any $\epsilon<|\beta_{\min}|/\sqrt{2}$, when $r = \epsilon\pi/(2|\beta_{\min}|)<\pi/(2\sqrt{2})$, it is found that $f(r) \geq \beta_{\min}^{2}\sin^{2}(\sqrt{2}r)/2 \geq 4\beta_{\min}^{2}r^{2}/\pi^{2} = \epsilon^{2}$, so that the $\epsilon$ near-critical set about the critical submanifold is contained with the radius $r$ spherical tube.  For any fixed radius, the volume fraction of this spherical tube was shown to converge to zero as $z\to \infty$ when the critical value $v\neq 0$.  So, it is seen that if Conjecture \ref{conj:gradientBound} holds, then the volume fraction converges to zero (as the negative power of a product of factorials) for the $\epsilon$ near-critical set about any critical submanifold with critical value $v\neq 0$.  Since saddles are non-attractive critical points, this suggests that as $z$ gets large, the probability becomes vanishingly small that the gradient flow from a (Haar distributed) random point passes through one of these flat ``near-critical'' regions around a saddle submanifold.

Referring back to the examples from Sections \ref{sec:volExamples} and \ref{sec:volFracExamples}, it may be seen that, since the transition probability $P_{i\to f}$ involves a rank one density matrix $\rho$ and observable operator $\Ob$, there exist sequences of these landscapes (for increasing $N$) that fit the required behavior for $\rho^{z}$ and $\Ob^{z}$ needed for the analysis in this section.  Of the two critical submanifolds of this landscape, only the maximum submanifold $P_{i\to f} = 1$ satisfies the further condition that $\zeta$ be nonzero.  The remaining example of non-degenerate $\rho$ and $\Ob$ for every $N$ does not adhere to these requirements and falls outside this analysis.  Ultimately, the asymptotic analysis of such an example in which $\rho^{N}$ and $\Ob^{N}$ are fully non-degenerate for all $N$ would be sensitive to the asymptotic behavior of the eigenvalues of $\rho^{N}$ and $\Ob^{N}$.

\bigskip

\section{Conclusions}
\label{sec:conclusions}
This work computed the volumes of the critical submanifolds of $J(U) = \Tr(U\rho U^{\dag}\Ob)$ in the induced Hilbert-Schmidt measure, and developed estimates and bounds  for the volume fractions of near-critical sets of the form $C_{\epsilon}^{\perm} = \{U \;:\; \|\grad J(U)\|\leq \epsilon\}\subset\UN$.  An asymptotic analysis of these volume fractions revealed that, when the critical value is non-zero, the volume fraction converges to zero as $N\to\infty$.  This result helps to explain previous observations that numerical quantum optimal control experiments seem not to be adversely affected by the presence of a large number of high-dimensional saddle submanifolds.

The work presented here focussed on the geometry of the kinematic landscape $J:\UN\to\mathbb{R}$.  Although it is outside the scope of this paper, to relate this work more closely to numerical and laboratory quantum optimal control experiments, these results should be pulled back to the corresponding dynamical landscape $\tilde{J}:\CtlSp\to\mathbb{R}$ defined on the space of controls.  This effort must address a number of difficulties including the dependence of these landscapes on the details of the quantum system and, depending on definitions, dependence on the final time $T$.  But perhaps the biggest problem is that, in order to make mathematical sense of the concept of volume fractions, a probability measure needs to be defined either explicitly or implicitly on the control space $\CtlSp$, which is typically infinite dimensional and unbounded, such as $\CtlSp = L^{2}([0,T];\mathbb{R})$.  Overcoming these difficulties could provide a clearer picture of the gradient flow of $\tilde{J}$ on $\CtlSp$.

\bigskip

\section{Acknowledgments}
This work was supported, in part, by U.S. Department of Energy (DOE) Contract No. DE-AC02-76-CHO-3073 through the Program in Plasma Science and Technology at Princeton.    We also acknowledge support from Lockheed Martin and from DOE grant No. DE-FG02-02ER15344

\bigskip

\appendix
\section{The Chevalley Lattice of $\mathrm{U}(a_{1})\oplus\dots\oplus\mathrm{U}(a_{b})$\label{sec:ChevalleyLattice}}
The first step to evaluating the volume of a compact Lie group $G$ via Macdonald's formula \cite{Macdonald1980, Hashimoto1997} is to work out it's Chevalley lattice $\gZ$ (closely related to the concept of the Chevalley basis \cite{Humphreys1972}).  In this appendix, we describe this lattice for groups of the form $G = \mathrm{U}(a_{1})\oplus\dots\oplus\mathrm{U}(a_{b})$ as we will need to compute volumes of two such Lie groups to obtain the desired structure of the critical set $\Orb(\perm)$.  To that end, first observe that the set of all diagonal matrices in $G$ forms a maximal abelian torus $T$, the Lie algebra of which, $\mathfrak{t}$, is the set of all diagonal matrices in $\mathfrak{g} = \mathrm{u}(a_{1})\oplus\dots\oplus\mathrm{u}(a_{b})$.  Then defining $\tZ$ such that $2\pi\tZ$ is the kernel of $\exp:\mathfrak{t}\to T$, it is found that $\tZ = i\diag(\mathbb{Z}^{\bar{a}})$ is the lattice of diagonal $\bar{a}\times \bar{a}$ matrices with imaginary integer elements, where $\bar{a} = \sum a_{j}$.  

	Now, each $\mathrm{u}(a_{l})$ has complexification $\mathrm{gl}(a_{l})$ which is the direct sum of $\mathrm{sl}(a_{l})$ and a one-dimensional abelian algebra.  So the root vectors of $\mathrm{gl}(a_{l})$ are just those of $\mathrm{sl}(a_{l})$.  The root space decomposition of $\mathrm{sl}(a_{l})$ is defined by the positive roots $\alpha_{jk}^{l}$ ($1\leq j<k\leq a_{l}$) and the corresponding coroots $H_{\alpha_{jk}^{l}}$ and root vectors $X_{\alpha_{jk}^{l}}$ \cite{Hall2003} given by 
	\begin{subequations}
	\begin{align}
		\alpha_{jk}^{l}(H) & = H_{jj} - H_{kk} & 1\leq j<k\leq a_{l}\\
		H_{\alpha_{jk}^{l}} & = |j\rangle\langle j| - |k\rangle\langle k| \\
		X_{\alpha_{jk}^{l}} & = |j\rangle\langle k|\\
		X_{-\alpha_{jk}^{l}} & = -|k\rangle\langle j|.
	\end{align}
	\end{subequations}
where $H$ is an arbitrary element of the Cartan subalgebra of diagonal matrices in $\mathrm{sl}(a_{l})$.  If, for any $X = \xi + i\eta \in\mathrm{sl}(a_{l})$ with $\xi,\eta\in\mathrm{su}(a_{l})$, we define $\bar{X} = \xi - i\eta$, then the $X_{\alpha}$'s above satisfy $X_{-\alpha} = \bar{X}_{\alpha}$ as indicated in \cite{Macdonald1980}.  Then finally let 
	\begin{subequations}
	\begin{align}
		\xi_{\alpha_{jk}^{l}} & := \big(\underbrace{0,\dots,0}_{l-1\text{ zeros}},X_{\alpha_{jk}^{l}} + X_{-\alpha_{jk}^{l}},\underbrace{0,\dots,0}_{b-l\text{ zeros}}\big) = \big(0,\dots,0,|j\rangle\langle k| - |k\rangle \langle j|,0,\dots,0\big)\\
		\eta_{\alpha_{jk}^{l}} & := \big(0,\dots,0,i(X_{\alpha_{jk}^{l}} - X_{-\alpha_{jk}^{l}}),0,\dots,0\big) = \big(0,\dots,0,i\big(|j\rangle\langle k| + |k\rangle \langle j|\big),0,\dots,0\big)
	\end{align}
	\end{subequations}
for all $1\leq l \leq b$ and all $1\leq j < k \leq a_{l}$.  These vectors, along with the basis of $\tZ$ given by $\{\tau_{j}:=i |j\rangle\langle j|\}$ for $j=1,\dots,\bar{a}$, form the basis for the Chevalley lattice of $\mathrm{U}(a_{1})\oplus\dots\oplus\mathrm{U}(a_{b})$ denoted by $\gZ$ in \cite{Macdonald1980}.  The volume of the fundamental cell $\mathfrak{g}/\gZ$ with respect to a given inner product is just the square root of the determinant of the Gram matrix constructed from the basis for $\gZ$.  

\bigskip

\section{The Hessian of $J(U) = \Tr(U\rho U^{\dag}\mathcal{O})$\label{sec:HessianStructure}}
In this appendix, the Hessian operator of $J$ is described and its eigenvalues and eigenvectors obtained.  First, note that, since $U^{\dag}U = \mathbb{I}$, $\delta U^{\dag} U + U^{\dag}\delta U = 0$ for any $\delta U \in T_{U}\UN$, so $\delta U^{\dag} = - U^{\dag}\delta U U^{\dag}$.  Let $J:\UN\rightarrow \mathbb{R}$ be given by $J(U) = \Tr(U\rho U^{\dag}\mathcal{O})$.  Then 
	\begin{subequations}
	\begin{align}
		\rmd_{U}J(\delta U) & = \Re\Tr(\delta U \rho U^{\dag} \mathcal{O} + U\rho \delta U^{\dag}\mathcal{O}) = \Re\Tr(\delta U\rho U^{\dag}\mathcal{O} - U\rho U^{\dag}\delta U U^{\dag}\mathcal{O})\\
		& = \Re\Tr([\rho, U^{\dag}\mathcal{O}U]U^{\dag}\delta U) = \langle U[U^{\dag}\mathcal{O}U,\rho], \delta U\rangle
	\end{align}
	\end{subequations}
so that $\grad J(U) = U[U^{\dag}\mathcal{O}U,\rho]\in T_{U}\UN$ and $\grad J$ is a vector field over $\UN$, i.e. $\grad J\in\mathfrak{X}\big(\UN\big)$ [the $C^{\infty}$ module of smooth vector fields on $\UN$].  For any $X\in\mathfrak{X}\big(\UN\big)$, the Hessian of $J$ is given by $\Hess(X) = \nabla_{X}\grad J$.  Because $\UN$ is endowed with the Riemannian metric induced from the real Hilbert-Schmidt inner product on $\mathbb{C}^{N\times N}$, $\nabla_{X}\grad J = \big(\rmd(\grad J)(\overline{X})\big)^{\mathscr{T}}$ \cite{doCarmo1992}, where $\overline{X}$ is an extension of $X$ to $\mathbb{C}^{N\times N}$ and $\mathscr{T}$ denotes the tangential part.  Now, \begin{equation}\rmd_{U}\grad J(\delta U) = \delta U[U^{\dag}\mathcal{O}U,\rho] + U[\delta U^{\dag}\mathcal{O}U,\rho] + U[U^{\dag}\mathcal{O}\delta U,\rho]\end{equation}
and therefore, 
	\begin{subequations}
	\begin{align}
		\Hess(X) & = \nabla_{X}\grad J = \frac{1}{2}U\Big(U^{\dag}X[U^{\dag}\mathcal{O}U,\rho] + [U^{\dag}\mathcal{O}U,\rho]X^{\dag} U + 2[X^{\dag}\mathcal{O}U,\rho] + 2[U^{\dag}\mathcal{O}X,\rho]\Big)\\
		& = \frac{1}{2}U\Big(U^{\dag}X[U^{\dag}\mathcal{O}U,\rho] - [U^{\dag}\mathcal{O}U,\rho]U^{\dag}X + 2[X^{\dag}UU^{\dag}\mathcal{O}U,\rho] + 2[U^{\dag}\mathcal{O}UU^{\dag}X,\rho]\Big)\\
		& = \frac{1}{2}U\Big(U^{\dag}X[U^{\dag}\mathcal{O}U,\rho] - [U^{\dag}\mathcal{O}U,\rho]U^{\dag}X - 2[U^{\dag}X U^{\dag}\mathcal{O}U,\rho] + 2[U^{\dag}\mathcal{O}UU^{\dag}X,\rho]\Big)\\
		& = U\left(\frac{1}{2}[U^{\dag}X,[U^{\dag}\mathcal{O}U,\rho]] + [[U^{\dag}\mathcal{O}U,U^{\dag}X],\rho]\right).
	\end{align}
	\end{subequations}
Any Hessian may be written as $\langle Y, \Hess(X)\rangle = XYJ - \big(\nabla_{X}Y\big)J$.  Note that $XYJ$ is second order in $J$ and first order in the manifold (i.e., relates to tangent spaces), while $\big(\nabla_{X}Y\big)J$ is first order in $J$ and second order in the manifold (i.e., relates to curvature of the manifold), so this second term vanishes at critical points and on flat spaces (when $X$ and $Y$ are constant vector fields).  In the above case, for left invariant $X$ and $Y$, it can be shown that $XYJ = \left\langle Y, U[[U^{\dag}\mathcal{O}U,U^{\dag}X],\rho]\right \rangle$ and $\big(\nabla_{X}Y\big)J = -\left\langle Y,\frac{1}{2}U[U^{\dag}X,[U^{\dag}\mathcal{O}U,\rho]]\right\rangle$.

Suppose that $U$ is a critical point of $J$, so that $\grad J(U) = U[U^{\dag}\mathcal{O}U,\rho] = 0$ and $\Hess(X) = U[[U^{\dag}\mathcal{O}U,U^{\dag}X],\rho]$.  Let $\Omega$ diagonalize $\rho$ and $\Gamma$ diagonalize $\mathcal{O}$ such that the diagonal elements of $\Lambda = \Omega\rho\Omega^{\dag}$ and $\Sigma = \Gamma\mathcal{O}\Gamma^{\dag}$ are decreasing.  Then $U = \Gamma^{\dag} V\perm W^{\dag}\Omega$ for some $(V,W)\in\Umn$ and some $\perm\in S_{N}$ \cite{Wu2008}.  We can then write
	\begin{equation}\Hess(UY) = U[[\Omega^{\dag}W\perm^{\dag}\Sigma\perm W\Omega, Y], \Omega^{\dag}\Lambda\Omega] =  U\Omega^{\dag}W[[\perm^{\dag}\Sigma\perm, W^{\dag}\Omega Y\Omega^{\dag}W], \Lambda]\Omega W^{\dag}.\end{equation}
Then $\Hess(UY) = \beta UY$ if and only if
	\begin{subequations} 
	\begin{align}
		\beta\tilde{Y} & = [[\perm^{\dag}\Sigma\perm, \tilde{Y}], \Lambda]\\
		\beta\tilde{Y}_{jk} & = -(\lambda_{j} - \lambda_{k})(\sigma_{\perm(j)} - \sigma_{\perm(k)})\tilde{Y}_{jk}
	\end{align}
	\end{subequations}
where $\tilde{Y} := W^{\dag}\Omega Y\Omega^{\dag}W$.  So either $\beta = -(\lambda_{j}- \lambda_{k})(\sigma_{\perm(j)} - \sigma_{\perm(k)})$ or $\tilde{Y}_{jk} = 0$.  Then an orthonormal set of solutions is given by
	\begin{subequations}
	\label{eqn:eigenvectors}
	\begin{align}
		\beta & = 0 & \tilde{Y} = i|l\rangle \langle l|\\
		\beta & = -(\lambda_{j} - \lambda_{k})(\sigma_{\perm(j)} - \sigma_{\perm(k)}) & \tilde{Y} = \frac{i}{\sqrt{2}}\big(|j\rangle\langle k| + |k\rangle\langle j|\big)\\
		\beta & = -(\lambda_{j} - \lambda_{k})(\sigma_{\perm(j)} - \sigma_{\perm(k)}) & \tilde{Y} = \frac{1}{\sqrt{2}}\big(|j\rangle\langle k| - |k\rangle\langle j|\big)
	\end{align}
	\end{subequations}
for $1\leq l\leq N$ and $1\leq j<k\leq N$.  These $\beta$ and the corresponding vectors $UY = U\Omega^{\dag}W\tilde{Y}W^{\dag}\Omega$ are the eigenvalues and eigenvectors of $\Hess$ at the critical point $U$.

\bigskip

\section{The Shape Operators of the Critical Submanifolds}
\label{sec:shapeOperators}
Consider a critical submanifold for some permutation matrix $\perm$ and degeneracy structures $\mathbf{m}$ and $\mathbf{n}$ of $\Ob$ and $\rho$, respectively.  Then, let $\Omega\in\UN$ and $\Gamma\in\UN$ diagonalize $\rho$ and $\Ob$, so that $\Omega\rho\Omega^{\dag} = \Lambda$ and $\Gamma\Ob\Gamma^{\dag}$ with $\lambda_{1}\geq\dots\geq \lambda_{N}$ and $\sigma_{1}\geq\dots\geq\sigma_{N}$.  The critical submanifold is then $C = \Gamma^{\dag}\Un\perm\Um\Omega\subset\UN$.  At any $U = \Gamma^{\dag}V\perm W^{\dag}\Omega$ in this critical submanifold, the Hessian eigenvectors identified in \eqref{eqn:eigenvectors} corresponding to zero eigenvalue describe an orthonormal basis for the tangent space of $C$, $\rmT_{U}C$, and the eigenvectors corresponding to nonzero eigenvalues form an orthonormal basis for the normal space of $C$ at $U$, $(\rmT_{U}C)^{\perp}$.  While such a critical submanifold $C$ is not totally geodesic in general, it is the case that the geodesics along these particular basis directions for $\rmT_{U}C$ remain on $C$.  To see this, first note that each of these basis vectors $X$ has the property that either $[U^{\dag}X, U^{\dag}\Ob U] = 0$ or $[U^{\dag}X,\rho] = 0$.  This follows from the fact that, when $\tilde{X} = W^{\dag}\Omega U^{\dag}X\Omega^{\dag}W]\in\uN$, either $\tilde{X} = i|l\rangle\langle l|$ for some $l = 1,\dots,N$ so that $X$ commutes with both $U^{\dag}\Ob U$ and $\rho$, or $\tilde{X} = z|j\rangle \langle k| - \bar{z}|k\rangle\langle j|$ for some $|z|^{2} = 1/2$ and $j<k$ with either $\lambda_{j} = \lambda_{k}$ or $\sigma_{\perm(j)} = \sigma_{\perm(k)}$.  When $U^{\dag}X$ commutes with $U^{\dag}\Ob U$, then $\perm\tilde{X}\perm^{\dag}$ commutes with $\Sigma$, so that $\perm\tilde{X}\perm^{\dag}\in\um$.  Then the curve $\gamma(s) = e^{sXU^{\dag}}U = \Gamma^{\dag}V e^{s\perm\tilde{X}\perm^{\dag}}\perm W^{\dag}\Omega$ is a geodesic in $C$ in the direction $\gamma'(0) = X$, which can be thought of in terms of the path through $\Umn$, $(V(s),W(s)) = (Ve^{s\perm\tilde{X}\perm^{\dag}}, W)$.  Likewise, when $U^{\dag}X$ commutes with $\rho$, then $\tilde{X}$ commutes with $\Lambda$, so that $\tilde{X}\in\un$.  Then the curve $\gamma(s) = Ue^{sU^{\dag}X} = \Gamma^{\dag}V \perm e^{s\tilde{X}}W^{\dag}\Omega$ is a geodesic in $C$ in the direction $\gamma'(0) = X$, which can be thought of in terms of the path through $\Umn$, $(V(s),W(s)) = (V, W e^{-s\tilde{X}})$.  Note that, while the geodesic curve is uniquely defined for a given basis vector $X$, the corresponding path through $\Umn$ is not unique because of the ambiguity in associating $(V,W)\in\Umn$ to a given $U\in C$ afforded by the stabilizer subgroup $\Stab_{\Umn}(\perm)\cong \UK$.

For any two of these basis vectors for $\rmT_{U}C$, say $X$ and $Y$, the second fundamental form $S(X,Y)$ at $U$ is defined to be $\big(\nabla_{\hat{X}}\hat{Y}\big)_{U}^{\perp}$, where $\hat{X}$ and $\hat{Y}$ are smooth local extensions of $X$ and $Y$.  This can also be calculated by extending $Y$ along the geodesic in the direction of $X$ and taking the normal part of the covariant derivative of this field.  Letting $\tilde{Y} = W^{\dag}\Omega U^{\dag}Y\Omega^{\dag}W\in\uN$, when $U^{\dag}X$ commutes with $U^{\dag}\Ob U$, the vector $Y$ can be extended along the geodesic $\gamma(s) = e^{sXU^{\dag}}U$ by $\hat{Y}(s) = \gamma(s)\Omega^{\dag}W\tilde{Y}W^{\dag}\Omega = \Gamma^{\dag}Ve^{s\perm\tilde{X}\perm^{\dag}}\perm \tilde{Y}W^{\dag}\Omega = \Gamma^{\dag}V\perm e^{s\tilde{X}} \tilde{Y}W^{\dag}\Omega$ where the path through $\Umn$ $(V(s),W(s)) = (Ve^{s\perm\tilde{X}\perm^{\dag}}, W)$ is used to define the extension.  Likewise, when $U^{\dag}X$ commutes with $\rho$, the vector $Y$ can be extended along the geodesic $\gamma(s) = Ue^{sU^{\dag}X}$ by $\hat{Y}(s) = \Gamma^{\dag}V\perm \tilde{Y}e^{s\tilde{X}}W^{\dag}\Omega$ by using $(V(s),W(s)) = (V,W e^{-s\tilde{X}})$.  Since the paths through $\Umn$ are not uniquely defined, these extensions of $Y$ along $\gamma$ are not unique.  Indeed, this is clear in the cases where $U^{\dag}X$ commutes with both $U^{\dag}\Ob U$ and $\rho$, as the two extensions of $Y$ described here are different for any $Y$ such that $[\tilde{Y},\tilde{X}]\neq 0$.  However, since the second fundamental form is tensorial \cite{Sakai1996}, the choice of extension of $Y$ will ultimately not be important.

The covariant derivative of $\hat{Y}(s)$ can be computed simply as the ordinary derivative of $\hat{Y}$ (as a path through $\mathbb{C}^{N\times N}$) followed by projection down to the tangent space $\rmT_{\gamma(s)}\UN$.  Then, using $^{\top}$ to denote the tangential component, the covariant derivatives for the particular extensions $\hat{Y}(s)$ defined above are
	\begin{subequations}
	\begin{align}
		\nabla_{\hat{X}}\hat{Y} & = \left(\left.\frac{\rmd}{\rmd s}\hat{Y}(s)\right|_{s = 0}\right)^{\top} = \begin{cases}\left(\Gamma^{\dag}V\perm\tilde{X}\tilde{Y}W^{\dag}\Omega\right)^{\top} & \text{for } [U^{\dag}X,U^{\dag}\Ob U] = 0\\\left(\Gamma^{\dag}V\perm\tilde{Y}\tilde{X}W^{\dag}\Omega\right)^{\top} & \text{for } [U^{\dag}X, \rho] = 0\end{cases}\\
		& = \begin{cases}\frac{1}{2}\left(\Gamma^{\dag}V\perm\tilde{X}\tilde{Y}W^{\dag}\Omega - U\Omega^{\dag}W^{\dag}\tilde{Y}\tilde{X}\perm^{\dag}V^{\dag}\Gamma U\right) & \text{for } [U^{\dag}X,U^{\dag}\Ob U] = 0\\\frac{1}{2}\left(\Gamma^{\dag}V\perm\tilde{Y}\tilde{X}W^{\dag}\Omega - U\Omega^{\dag}W^{\dag}\tilde{X}\tilde{Y}\perm^{\dag}V^{\dag}\Gamma U\right) & \text{for } [U^{\dag}X, \rho] = 0\end{cases}\\
		& = \begin{cases}\frac{1}{2}\Gamma^{\dag}V\perm[\tilde{X},\tilde{Y}]W^{\dag}\Omega & \text{for } [U^{\dag}X,U^{\dag}\Ob U] = 0\\ -\frac{1}{2}\Gamma^{\dag}V\perm[\tilde{X},\tilde{Y}]W^{\dag}\Omega & \text{for }[U^{\dag}X, \rho] = 0.\end{cases}\label{eqn:covariantDerivative}
	\end{align}
	\end{subequations}
Then $S(X,Y) = \big(\nabla_{\hat{X}}\hat{Y}\big)^{\perp}$ is the normal component, where the normal space $(\rmT_{U}C)^{\perp}$ is spanned by the vectors $Z = \Gamma^{\dag}V\perm\tilde{Z}W^{\dag}\Omega$ where $\tilde{Z} = z|j\rangle\langle k| - \bar{z}|k\rangle\langle j|$ such that $|z|^{2}=1/2$ and $j$ and $k$ are such that $\lambda_{j}\neq\lambda_{k}$ and $\sigma_{\perm(j)}\neq\sigma_{\perm(k)}$.  Now, observe that if $\tilde{X} = i|l\rangle\langle l|$ and $\tilde{Y} = i|m\rangle\langle m|$, then $[\tilde{X},\tilde{Y}] = 0$.  And if $\tilde{X} = i|l\rangle\langle l|$ and $\tilde{Y} = w|j\rangle\langle k| - \bar{w}|k\rangle\langle j|$ for $j<k$, then $[\tilde{X},\tilde{Y}] = (\delta_{jl}-\delta_{kl})\big((iw)|j\rangle\langle k| - \overline{(iw)}|k\rangle\langle j|\big)$ so that $\nabla_{\hat{X}}\hat{Y}\in \rmT_{U}C$.  So whenever $X$ is such that $\tilde{X} = i|l\rangle\langle l|$, $S(X,Y) = 0$ for all $Y\in \rmT_{U}C$.  Moreover, if $\tilde{X} = v|l\rangle\langle m| - \bar{v}|m\rangle\langle l|$ and $\tilde{Y} = w|j\rangle\langle k| - \bar{w}|k\rangle\langle j|$, then 
	\begin{align}
		[\tilde{X},\tilde{Y}] & = \delta_{jm}\big(vw|l\rangle\langle k| - \bar{v}\bar{w}|k\rangle\langle l|\big) + \delta_{km}\big(\bar{v}w|j\rangle\langle l| - v\bar{w}|l\rangle\langle j|\big)\nonumber\\
		& \qquad + \delta_{kl}\big(\bar{v}\bar{w}|m\rangle\langle j| - vw|j\rangle\langle m|\big) + \delta_{jl}\big(v\bar{w}|k\rangle\langle m| - \bar{v}w|m\rangle\langle k|\big).
	\end{align}

Consequently, if $U^{\dag}X$ and $U^{\dag}Y$ both commute with $U^{\dag}\Ob U$, then $\nabla_{\hat{X}}\hat{Y}$ from \eqref{eqn:covariantDerivative} also commutes with $U^{\dag}\Ob U$ and therefore lies in $\rmT_{U}C$, so that $S(X,Y) = \big(\nabla_{\hat{X}}\hat{Y}\big)^{\perp} = 0$.  Likewise, if $U^{\dag}X$ and $U^{\dag}Y$ both commute with $\rho$, then $\nabla_{\hat{X}}\hat{Y}$ commutes with $\rho$ so that again $S(X,Y) = \big(\nabla_{\hat{X}}\hat{Y}\big)^{\perp} = 0$.  So among these basis vectors for $\rmT_{U}C$, the only pairs $X$ and $Y$ for which $S(X,Y)$ can be non-zero are those in which one of the pair $\{U^{\dag}X,U^{\dag}Y\}$ commutes with $U^{\dag}\Ob U$, but not with $\rho$, and the other commutes with $\rho$, but not with $U^{\dag}\Ob U$.  Then the basis vectors for $\rmT_{U}C$ can be divided into three categories: (1) those for which $U^{\dag}X$ commutes with $U^{\dag}\Ob U$ but not $\rho$, of which there are $
\sum m_{i}^{2} - \sum k_{ij}^{2}$; (2) those for which $U^{\dag}X$ commutes with $\rho$ but not $U^{\dag}\Ob U$, of which there are $\sum n_{j}^{2} - \sum k_{ij}^{2}$; and (3) the remaining $\sum k_{ij}^{2}$ basis vectors for which $U^{\dag}X$ commutes with both $U^{\dag}\Ob U$ and $\rho$.  

Taking any normal vector $Z\in (\rmT_{U}C)^{\perp}$, the \emph{shape operator} (or Weingarten map) \cite{Sakai1996} $A_{Z}:\rmT_{U}C\to\rmT_{U}C$, defined by $\langle A_{Z}X,Y\rangle = \langle S(X,Y),Z\rangle$, can be represented in block form using these three categories of tangent vectors as
\begin{equation}A_{Z} = \begin{bmatrix}0 & B & 0\\B^{T} & 0 & 0 \\ 0 & 0 & 0\end{bmatrix}\end{equation}
where $B$ is an $(\sum m_{i}^{2}-\sum k_{ij}^{2})\times(\sum n_{j}^{2} - \sum k_{ij}^{2})$ matrix.  The eigenvalues of $A_{Z}$ are the \emph{principal curvatures} of the submanifold $C$ at the point $U$ with respect to the normal vector $Z$.  Because of the indicated block structure, the nonzero eigenvalues of $A_{Z}$ come in positive-negative pairs $\pm\eta$, where $\eta$ is a nonzero eigenvalue of $B^{T}B$.   Moreover, these eigenvalues can be bounded by observing that the squared Hilbert-Schmidt norm $\|B\|^{2} = \sum_{j<m}\frac{|\tilde{Z}_{jm}|^{2}}{2}\big(k_{q_{m}t_{j}} + k_{p_{j}u_{m}})$ where $\lambda_{m} = \tilde{\lambda}_{q_{m}}$, $\lambda_{j} = \tilde{\lambda}_{p_{j}}$, $\sigma_{\perm(j)} = \tilde{\sigma}_{t_{j}}$, $\sigma_{\perm(m)} = \tilde{\sigma}_{u_{m}}$, and $\tilde{\lambda}_{1}>\dots>\tilde{\lambda_{s}}$ and $\tilde{\sigma}_{1}>\dots>\tilde{\sigma}_{r}$ are the distinct eigenvalues of $\rho$ and $\Ob$ with multiplicities $m_{1},\dots,m_{s}$ and $n_{1},\dots,n_{r}$.  Most importantly, $\Tr(A_{Z}) = 0$ for every normal vector $Z$, so that the mean curvature vector field \cite{Gray2004} $H = \sum S(X_{i},X_{i})$ for any orthonormal basis $\{X_{i}\}$ of $\rmT_{U}C$, is zero.  Therefore every critical submanifold of $J(U) = \Tr(U\rho U^{\dag}\Ob)$ is a \emph{minimal} submanifold of $\UN$ \cite{Sakai1996}.

\bigskip

\subsection{Examples}
\label{sec:shapeOpExamples}
We now return to the examples considered in Sections \ref{sec:volExamples} and \ref{sec:volFracExamples} and compute the shape operators of the critical submanifolds.
\begin{example}[Maximum Submanifold of $P_{i\to f}$]
	From the contingency table for $P_{i\to f} = 1$ (see Section \ref{sec:volExamples}), it is clear that $\sum m_{i}^{2} - \sum k_{ij}^{2} = \sum n_{j}^{2} - \sum k_{ij}^{2} = 0$, so that the matrix $B$ in the shape operator $A_{Z}$ is $0\times 0$.  Therefore $A_{Z}$ is always the zero operator for all normal vectors $Z$, meaning that this critical submanifold is totally geodesic \cite{Sakai1996}.
\end{example}

\begin{example}[Minimum Submanifold of $P_{i\to f}$]
	Consider the case where $\sigma_{\perm(2)} = 1$ and $\sigma_{\perm(j)} = 0$ for all $j\neq 2$.  The first category basis vectors are then $X_{j} = \Gamma^{\dag}V\perm\tilde{X}_{j}W^{\dag}\Omega$, where $\tilde{X}_{j} = w_{j}|1\rangle \langle j| - \bar{w}_{j}|j\rangle\langle 1|$ for $j=3,\dots,N$ and for $w_{j} = \frac{1}{\sqrt{2}}$ and $w_{j} = \frac{i}{\sqrt{2}}$.  Likewise, the second category basis vectors are $Y_{j} = \Gamma^{\dag}V\perm\tilde{Y}_{j}W^{\dag}\Omega$ for $\tilde{Y}_{j} = v_{j}|2\rangle \langle j| - \bar{v}_{j}|j\rangle\langle 2|$ for $j=3,\dots,N$, and for $v_{j} = \frac{1}{\sqrt{2}}$ and $v_{j} = \frac{i}{\sqrt{2}}$.  So $\nabla_{\hat{X}_{j}}\hat{Y}_{k} = \delta_{jk}\frac{1}{2}\Gamma^{\dag}V\perm[\tilde{X}_{j},\tilde{Y}_{j}]W^{\dag}\Omega$, where $[\tilde{X}_{j},\tilde{Y}_{j}] = -w_{j}\bar{v}_{j}|1\rangle\langle 2| + \bar{w}_{j}v_{j}|2\rangle\langle 1|$. Then for any normal vector $Z = \Gamma^{\dag}V\perm\tilde{Z}W^{\dag}\Omega$ with $\tilde{Z} = z|1\rangle\langle 2| - \bar{z}|2\rangle\langle 1|$, $\langle \nabla_{\hat{X}}\hat{Y}, Z\rangle =\frac{1}{2}\langle [\tilde{X}, \tilde{Y}], \tilde{Z}\rangle = -\Re(\bar{w}_{j}v_{j}z)$, so the $(2N-4)\times(2N-4)$ matrix $B$ in the shape operator $A_{Z}$ is 
	\begin{equation}B = -\frac{1}{2}\mathbb{I}_{N-2}\otimes\begin{bmatrix}\Re(z) & \Im(z)\\-\Im(z) & \Re(z)\end{bmatrix}.\end{equation}  Then $B^{T}B = \frac{|z|^{2}}{4}\mathbb{I}_{2N-4}$, so that $A_{Z}$ has eigenvalue $\eta = \frac{|z|}{2}$ with multiplicity $2N-4$, $\eta = -\frac{|z|}{2}$ with multiplicity $2N-4$, and $\eta = 0$ with multiplicity $N^{2}-4N+6$.  When $Z$ is normalized with respect to the Hilbert-Schmidt norm, $|z|^{2} = 1/2$, so the nonzero eigenvalues of $A_{Z}$ are $\eta = \pm\frac{1}{2\sqrt{2}}$.
\end{example}

\begin{example}[Fully Non-degenerate $\rho$ and $\Ob$]
	In the case where $\rho$ and $\Ob$ are both fully non-degenerate, $m_{1} = \dots = m_{N} = 1$, $n_{1} = \dots = n_{N} = 1$ and $\mathbf{K} = \perm^{\dag}$.  Then for any critical submanifold $\Orb(\perm)$, $\sum m_{i}^{2} - \sum k_{ij}^{2} = \sum n_{j}^{2} - \sum k_{ij}^{2} = 0$, so that the matrix $B$ in the shape operator $A_{Z}$ is $0\times 0$.  Therefore $A_{Z}$ is always the zero operator for all normal vectors $Z$, so that every critical submanifold of $J$ is totally geodesic when $\rho$ and $\Ob$ are fully non-degenerate.
\end{example}

\bigskip

\section{In Support of Conjecture \ref{conj:gradientBound}}
\label{sec:supportConj}
In this appendix we build the case for Conjecture \ref{conj:gradientBound}.  Let $U = \Gamma^{\dag}V\perm W^{\dag}\Omega$ be a critical point, where $(V,W)\in\Umn$ and $\Omega,\Gamma\in\UN$ diagonalize $\rho$ and $\Ob$, respectively, i.e. $\Omega\rho\Omega^{\dag} = \Lambda$ and $\Gamma\Ob \Gamma^{\dag} = \Sigma$.  Also let $\tilde{A} = \sum_{l=1}^{L}\alpha_{l}\tilde{A}_{l}$, where $\sum \alpha_{l}^{2} = 1$, $\tilde{A}_{l} = z_{l}|j_{l}\rangle\langle k_{l}| - \bar{z}_{l}|k_{l}\rangle\langle j_{l}|$, $|z_{l}|^{2} = 1/2$, and $\{j_{1},k_{2},\dots, j_{L},k_{L}\}$ is a set of $2L$ distinct indices in $\{1,\dots, N\}$ such that the Hessian eigenvalue $\beta_{j_{l}k_{l}} = -(\lambda_{j_{l}}- \lambda_{k_{l}})(\sigma_{\perm(j_{l})} - \sigma_{\perm(k_{l})})$ is nonzero.  As shown in Appendix \ref{sec:HessianStructure}, for the Hessian at a critical point, every eigenspace corresponding to a nonzero eigenvalue is spanned by eigenvectors of the form $z|j\rangle \langle k|-\bar{z}|k\rangle\langle j|$ with $j < k$.  Then $\exp(sA)\rho\exp(-sA) = \Omega^{\dag}W\exp(s\tilde{A})\Lambda\exp(-s\tilde{A})W^{\dag}\Omega$.  Now, defining $\tilde{A}_{l}^{+} = z_{l}|j_{l}\rangle\langle k_{l}| + \bar{z}_{l}|k_{l}\rangle\langle j_{l}|$ and $\tilde{A}^{+} = \sum \alpha_{l}\tilde{A}_{l}^{+}$, it is easily verified that 
	\begin{subequations}
	\begin{align}
		[\tilde{A},\Lambda] & = \sum_{l=1}^{L}(\lambda_{k_{l}}-\lambda_{j_{l}})\alpha_{l}\tilde{A}_{l}^{+}\\
		[\tilde{A}, \tilde{A}_{l}^{+}] & = \alpha_{l}\big(|j_{l}\rangle\langle j_{l}| - |k_{l}\rangle\langle k_{l}|\big)\\
		[\tilde{A}, |j_{l}\rangle\langle j_{l}| - |k_{l}\rangle\langle k_{l}|] & = -2\alpha_{l}\tilde{A}_{l}^{+},
	\end{align}
	\end{subequations}
so that $[\tilde{A},[\tilde{A}, \tilde{A}_{l}^{+}]] = -2\alpha_{l}^{2}\tilde{A}_{l}^{+}$, and $[\tilde{A},[\tilde{A}, |j_{l}\rangle\langle j_{l}| - |k_{l}\rangle\langle k_{l}|]] = -2\alpha_{l}^{2}\big(|j_{l}\rangle\langle j_{l}| - |k_{l}\rangle\langle k_{l}|\big)$.  Then, using the notation $\ad_{X}$ for the adjoint operator $\ad_{X}(Y) = [X,Y]$ \cite{Hall2003}, 
	\begin{subequations}
	\begin{align}
		\exp(s\tilde{A})\Lambda\exp(-s\tilde{A}) & = \exp(s\ad_{\tilde{A}})\Lambda = \sum_{m=0}^{\infty}\frac{s^{m}\ad_{\tilde{A}}^{m}(\Lambda)}{m!}\\
		& = \Lambda + \sum_{m=1}^{\infty}\frac{s^{2m}\ad_{\tilde{A}}^{2m}(\Lambda)}{(2m)!} + \sum_{m=0}^{\infty}\frac{s^{2m+1}\ad_{\tilde{A}}^{2m+1}(\Lambda)}{(2m+1)!}\\
		& = \Lambda + \sum_{l=1}^{L}\alpha_{l}^{2}(\lambda_{k_{l}}-\lambda_{j_{l}})\sum_{m=1}^{\infty}\frac{s^{2m}\ad_{\tilde{A}}^{2m-2}\big(|j_{l}\rangle\langle j_{l}|-|k_{l}\rangle\langle k_{l}|\big)}{(2m)!}\nonumber\\
		& \qquad + \sum_{l=1}^{L}\alpha_{l}(\lambda_{k_{l}}-\lambda_{j_{l}})\sum_{m=0}^{\infty}\frac{s^{2m+1}\ad_{\tilde{A}}^{2m}(\tilde{A}_{l}^{+})}{(2m+1)!}\\
		& = \Lambda + \sum_{l=1}^{L}(\lambda_{k_{l}}-\lambda_{j_{l}})\sum_{m=1}^{\infty}\frac{(\alpha_{l}s)^{2m}(-2)^{m-1}}{(2m)!}\big(|j_{l}\rangle\langle j_{l}|-|k_{l}\rangle\langle k_{l}|\big)\nonumber\\
		& \qquad + \sum_{l=1}^{L}(\lambda_{k_{l}}-\lambda_{j_{l}})\sum_{m=0}^{\infty}\frac{(\alpha_{l}s)^{2m+1}(-2)^{m}}{(2m+1)!}\tilde{A}_{l}^{+}\\
		& = \Lambda + \sum_{l}\frac{\lambda_{k_{l}}-\lambda_{j_{l}}}{2}\big(1-\cos(\sqrt{2}\alpha_{l}s)\big)\big(|j_{l}\rangle\langle j_{l}|-|k_{l}\rangle\langle k_{l}|\big)\nonumber\\
		& \qquad + \sum_{l}\frac{\lambda_{k_{l}}-\lambda_{j_{l}}}{\sqrt{2}}\sin(\sqrt{2}\alpha_{l}s)\tilde{A}_{l}^{+}
	\end{align}
	\end{subequations}
so that
\begin{subequations}
\begin{align}
	[U^{\dag}\Ob U, e^{sA}\rho e^{-sA}] & = \Omega^{\dag}W[\perm^{\dag}\Sigma\perm, \exp(s\tilde{A})\Lambda\exp(-s\tilde{A})]W^{\dag}\Omega\\
	& = \Omega^{\dag}W\Big([\perm^{\dag}\Sigma\perm, \Lambda] + \sum_{l}\frac{\lambda_{k_{l}}-\lambda_{j_{l}}}{2}\big(1-\cos(\sqrt{2}\alpha_{l}s)\big)\big[\perm^{\dag}\Sigma\perm,|j_{l}\rangle\langle j_{l}|-|k_{l}\rangle\langle k_{l}|\big]\nonumber\\
	& \qquad\qquad \mbox{} + \sum_{l}\frac{\lambda_{k_{l}}-\lambda_{j_{l}}}{\sqrt{2}}\sin(\sqrt{2}\alpha_{l}s)\big[\perm^{\dag}\Sigma\perm, \tilde{A}_{l}^{+}\big]\Big)W^{\dag}\Omega\\
	& = \sum_{l}\frac{(\lambda_{k_{l}}-\lambda_{j_{l}})(\sigma_{\perm(k_{l})} - \sigma_{\perm(j_{l})})}{\sqrt{2}}\sin(\sqrt{2}\alpha_{l}s)A_{l}\\
	& = -\sum_{l}\frac{\beta_{j_{l}k_{l}}}{\sqrt{2}}\sin(\sqrt{2}\alpha_{l}s)A_{l}
\end{align}
\end{subequations}
where $\beta_{jk}$ is the Hessian eigenvalue corresponding to the eigenvector $U\big(z|j\rangle \langle k| - \bar{z}|k\rangle\langle j|\big)$.  It follows that, for this choice of $A$, $f(s) = \sum_{l}\beta_{j_{l}k_{l}}^{2}\sin^{2}(\sqrt{2}\alpha_{l}s)/2$ for all $s$.  Now, let $q(s) = \sin^{2}(\sqrt{2}\alpha_{l}s) - \alpha_{l}^{2}\sin^{2}(\sqrt{2}s)$.  Then $q'(s) = \sqrt{2}\alpha_{l}\sin(2\sqrt{2}\alpha_{l}s) - \sqrt{2}\alpha_{l}^{2}\sin(2\sqrt{2}s)$, and $q''(s) = 4\alpha_{l}^{2}\big[\cos(2\sqrt{2}\alpha_{l}s) - \cos(2\sqrt{2}s)\big]$.  Since $q'(0) = 0$ and $q''(s)\geq 0$ for all $s\in[0,\pi/(2\sqrt{2})]$, $q'(s)\geq 0$ for all $s\in[0,\pi/(2\sqrt{2})]$.  And since $q(0) = 0$, this implies that $q(s)\geq 0$ for all $s\in[0,\pi/(2\sqrt{2})]$.  Therefore, $f(s)\geq \sum_{l}\beta_{j_{l}k_{l}}^{2}\alpha_{l}^{2}\sin^{2}(\sqrt{2}s)/2 \geq \beta_{\min}^{2}\sin^{2}(\sqrt{2}s)/2$, since $\sum\alpha_{l}^{2} = 1$.  So for all unit normal vectors $UA$ of this form, $f(s)$ satisfies the conjecture.  Moreover, when $\tilde{A} = z|j\rangle\langle k| - \bar{z}|k\rangle\langle j|$ for some $j\neq k$, $f(s) = \beta_{jk}^{2}\sin^{2}(\sqrt{2}s)/2$, and in the particular case when $j$ and $k$ are such that $UA$ is an eigenvector corresponding to the smallest eigenvalue (in absolute value), then $f(s) = \beta_{\min}^{2}\sin^{2}(\sqrt{2}s)/2$ for all $s$, so this conjectured lower bound is attained.

Now, for that same critical point $U = \Gamma^{\dag}V\perm W^{\dag}\Omega$, let $ = \mathcal{N}(\Hess)^{\perp}\subset \rmT_{U}\UN$ be the orthogonal complement of the null space of the Hessian at $U$, i.e. the span of the Hessian eigenvectors corresponding to non-zero eigenvalues.  Then let $\mathscr{S}\subset U^{\dag}Z\subset\uN$ be the unit sphere within $U^{\dag}Z$.  For any $s\in[0,\pi/(2\sqrt{2})]$, let $\zeta_{s}:\mathscr{S}\to\mathbb{R}$ be given by $\zeta_{s}(A) = \big\|\grad J\big(U\exp(sA)\big)\big\|^{2}$.  Then, using the integral expression for the derivative of the matrix exponential \cite[App. B]{Snider1964}, the differential of $\zeta_{s}$ is found to be
\begin{subequations}
\begin{align}
	\rmd_{A}\zeta_{s}(\delta A) & = 2\left\langle\left[e^{-sA}U^{\dag}\Ob U e^{sA}, \rho\right], \; \left[\left[e^{-sA}U^{\dag}\Ob Ue^{sA}, s\int_{0}^{1}e^{-srA}\delta A e^{srA}\,\rmd r\right],\rho\right]\right\rangle\\
	& = 2s\left\langle\int_{0}^{1}e^{srA}\left[e^{-sA}U^{\dag}\Ob Ue^{sA}, \left[\left[e^{-sA}U^{\dag}\Ob U e^{sA}, \rho\right],\rho\right]\right]e^{-srA}\,\rmd r, \;\delta A \right\rangle\\
	& = 2s\left\langle\int_{0}^{1}e^{-s(1-r)\tilde{A}}\left[\perm^{\dag}\Sigma\perm , \left[\left[ \perm^{\dag}\Sigma\perm , e^{s\tilde{A}}\Lambda e^{-s\tilde{A}}\right],e^{s\tilde{A}}\Lambda e^{-s\tilde{A}}\right]\right] e^{s(1-r)\tilde{A}}\,\rmd r, \;W^{\dag}\Omega\delta A\Omega^{\dag}W \right\rangle,\label{eqn:gradzeta}
\end{align} 
\end{subequations}
where $\tilde{A} = W^{\dag}\Omega A\Omega^{\dag}W$.  When $\tilde{A} = \sum_{l}\alpha_{l}\tilde{A}_{l}$ with $\tilde{A}_{l} = z_{l}|j_{l}\rangle\langle k_{l} - \bar{z}_{l}|k_{l}\rangle\langle j_{l}|$ as above, we have seen that $\big[ \perm^{\dag}\Sigma\perm , e^{s\tilde{A}}\Lambda e^{-s\tilde{A}}\big] = \sum_{l}\beta_{j_{l}k_{l}}\sin(\sqrt{2}\alpha_{l}s)\tilde{A}_{l}/\sqrt{2}$.  Then 
\begin{subequations}
\begin{align}
	\Big[\Big[ \perm^{\dag}\Sigma\perm , & e^{s\tilde{A}}\Lambda e^{-s\tilde{A}}\Big],  e^{s\tilde{A}}\Lambda e^{-s\tilde{A}}\Big]\nonumber\\
	& = \sum_{l=1}^{L}\frac{\beta_{j_{l}k_{l}}}{\sqrt{2}}\sin(\sqrt{2}\alpha_{l}s)e^{s\tilde{A}}\big[\tilde{A},\Lambda\big]e^{-s\tilde{A}}\\
	& = \sum_{l=1}^{L}\alpha_{l}(\lambda_{k_{l}} - \lambda_{j_{l}})\frac{\beta_{j_{l}k_{l}}}{\sqrt{2}}\sin(\sqrt{2}\alpha_{l}s)e^{s\tilde{A}}\tilde{A}_{l}^{+}e^{-s\tilde{A}}\\
	& = \sum_{l=1}^{L}\alpha_{l}(\lambda_{k_{l}} - \lambda_{j_{l}})\frac{\beta_{j_{l}k_{l}}}{\sqrt{2}}\sin(\sqrt{2}\alpha_{l}s)\sum_{m=0}^{\infty}\frac{s^{m}\ad_{\tilde{A}}^{m}(\tilde{A}_{l}^{+})}{m!}\\
	& = \sum_{l=1}^{L}\alpha_{l}(\lambda_{k_{l}} - \lambda_{j_{l}})\frac{\beta_{j_{l}k_{l}}}{\sqrt{2}}\sin(\sqrt{2}\alpha_{l}s)\sum_{m=0}^{\infty}\frac{s^{2m}\ad_{\tilde{A}}^{2m}(\tilde{A}_{l}^{+})}{(2m)!}\nonumber\\
	& \qquad \mbox{} + \sum_{l=1}^{L}\alpha_{l}(\lambda_{k_{l}} - \lambda_{j_{l}})\frac{\beta_{j_{l}k_{l}}}{\sqrt{2}}\sin(\sqrt{2}\alpha_{l}s)\sum_{m=0}^{\infty}\frac{s^{2m+1}\ad_{\tilde{A}}^{2m+1}(\tilde{A}_{l}^{+})}{(2m+1)!}\\
	& = \sum_{l=1}^{L}\alpha_{l}(\lambda_{k_{l}} - \lambda_{j_{l}})\frac{\beta_{j_{l}k_{l}}}{\sqrt{2}}\sin(\sqrt{2}\alpha_{l}s)\sum_{m=0}^{\infty}\frac{(\alpha_{l}s)^{2m}(-2)^{m}}{(2m)!}\tilde{A}_{l}^{+}\nonumber\\
	& \qquad \mbox{} + \sum_{l=1}^{L}\alpha_{l}(\lambda_{k_{l}} - \lambda_{j_{l}})\frac{\beta_{j_{l}k_{l}}}{\sqrt{2}}\sin(\sqrt{2}\alpha_{l}s)\sum_{m=0}^{\infty}\frac{(\alpha_{l}s)^{2m+1}(-2)^{m}}{(2m+1)!}\big(|j_{l}\rangle\langle j_{l}| - |k_{l}\rangle\langle k_{l}|\big)\\
	& = \sum_{l=1}^{L}\alpha_{l}(\lambda_{k_{l}} - \lambda_{j_{l}})\frac{\beta_{j_{l}k_{l}}}{2}\left\{\frac{1}{\sqrt{2}}\sin(2\sqrt{2}\alpha_{l}s)\tilde{A}_{l}^{+} + \sin^{2}(\sqrt{2}\alpha_{l}s)\big(|j_{l}\rangle\langle j_{l}| - |k_{l}\rangle\langle k_{l}|\big)\right\},
\end{align}
\end{subequations}
so that 
\begin{subequations}
\begin{align}
	\int_{0}^{1}e^{-s(1-r)\tilde{A}} & \left[\perm^{\dag}\Sigma\perm , \left[\left[ \perm^{\dag}\Sigma\perm , e^{s\tilde{A}}\Lambda e^{-s\tilde{A}}\right],e^{s\tilde{A}}\Lambda e^{-s\tilde{A}}\right]\right] e^{s(1-r)\tilde{A}}\,\rmd r \nonumber\\
	& = -\sum_{l=1}^{L}\alpha_{l}\frac{\beta_{j_{l}k_{l}}^{2}}{2\sqrt{2}}\sin(2\sqrt{2}\alpha_{l}s)\int_{0}^{1}e^{-s(1-r)\tilde{A}}\tilde{A}_{l} e^{s(1-r)\tilde{A}}\,\rmd r\\
	& = -\sum_{l=1}^{L}\alpha_{l}\frac{\beta_{j_{l}k_{l}}^{2}}{2\sqrt{2}}\sin(2\sqrt{2}\alpha_{l}s)\tilde{A}_{l},\label{eqn:constrainedGradzeta}
\end{align}
\end{subequations}
because $\tilde{A}_{l}$ commutes with $\tilde{A}$ for all $l$ under the stated assumptions.  Since the domain of $\zeta_{s}$ is the unit sphere $\mathscr{S}\subset U^{\dag}Z\subset \un$, $A$ is is critical point of $\zeta_{s}$ if and only if $\rmd_{A}\zeta_{s}(\delta A)$ is zero for all $\delta A$ perpendicular to $A$, i.e. if and only if \eqref{eqn:constrainedGradzeta} is proportional to $\tilde{A} = \sum \alpha_{l}\tilde{A}_{l}$. So for a given $s$ there exist many critical points of $\zeta_{s}$ among these $A$ matrices, in particular the cases in which $\tilde{A} = z|j\rangle \langle k| - \bar{z}|k|\rangle\langle j|$ are critical for all $s$.  If a more detailed critical point analysis could be performed for $\zeta_{s}$ with $s\in[0,\pi/(2\sqrt{2})]$, it might lead to a confirmation of Conjecture \ref{conj:gradientBound}.

In addition to these analytic results that hint at the conjecture, more than one million numerical simulations have been performed to test it.  In each simulation, random degeneracy structures were sampled for $\rho$ and $\Ob$, random eigenvalues chosen, a random critical point and a random normal vector selected.  Then $f(s)$ was computed over the interval $[0,\pi/(2\sqrt{2})]$ and compared to the function $\beta_{\min}^{2}\sin^{2}(\sqrt{2}s)/2$.  This was done for different system sizes including $N=6$, $N=17$, $N=25$, and 1000 trials with $N=256$ (which take much longer to run).  In every case Conjecture \ref{conj:gradientBound} was satisfied.  So seemingly, either the conjecture is true, or any counterexamples that exist are hard to find.

\bigskip

\bibliographystyle{unsrtnat}
\bibliography{../TrpO_nearcritical}

\end{document}